\title{Tight Bound for the Number of Distinct Palindromes in a Tree%
\footnote{This is a full version of a paper presented at SPIRE 2015~\cite{DBLP:conf/spire/GawrychowskiKRW15}.}}
\author{Pawe\l{} Gawrychowski\\
\small Institute of Computer Science\\[-0.8ex]
\small University of Wroc\l{}aw\\[-0.8ex] 
\small Poland\\
\small\tt gawry@cs.uni.wroc.pl\\
\and
Tomasz Kociumaka\\
\small Department of Computer Science\\[-0.8ex]
\small Bar-Ilan University\\[-0.8ex]
\small Ramat Gan, Israel\\
\small\tt kociumaka@mimuw.edu.pl\\
\and 
Wojciech Rytter \qquad Tomasz~Wale\'n\\
\small Institute of Informatics\\[-0.8ex]
\small University of Warsaw\\[-0.8ex]
\small Poland\\
\small\tt \{rytter,walen\}@mimuw.edu.pl}
\newcommand{\pal}{\textsf{pal}}
\newcommand{\val}{\mathrm{val}}
\newcommand{\dist}{\mathrm{dist}}
\newcommand{\DD}{\mathcal{D}}
\newcommand{\Oh}{\mathcal{O}}
\newcommand{\per}{\mathrm{per}}
\newcommand{\PalindromeTesting}{\textsc{PalindromeTest}}
\newcommand{\FindLongestPalindrome}{\textsc{FindLongest}}
\newcommand{\ReportAllPalindromes}{\textsc{ReportAll}}
\newcommand{\ReportAllEvenPalindromes}{\textsc{ReportAllEven}}
\newcommand{\perLen}{\mathsf{perLen}}
\newcommand{\distFunc}{\mathsf{dist}}
\newcommand{\upFunc}{\mathsf{up}}
\newcommand{\centerFunc}{\mathsf{center}}
\newcommand{\isAncestor}{\mathsf{isAncestor}}
\newcommand{\isPalindrome}{\mathsf{isPalindrome}}
\newcommand{\isEqual}{\mathsf{isEqual}}
\newcommand{\labelFunc}{\mathsf{label}}
\newcommand{\existsFunc}{\mathsf{exists}}
\newcommand{\childFunc}{\mathsf{child}}
\newtheorem{theorem}{Theorem}[section]
\newtheorem{lemma}[theorem]{Lemma}
\newtheorem{observation}[theorem]{Observation}
\newtheorem{claim}[theorem]{Claim}
\newtheorem{fact}[theorem]{Fact}
\theoremstyle{remark}
\newtheorem{problem}[theorem]{Problem}
\newcommand{\D}{\Psi}
\begin{document}

\maketitle

\begin{abstract}
    For an undirected tree with $n$ edges labelled by single letters, we consider its substrings,
    which are labels of the simple paths between pairs of nodes. 
    A palindrome is a word $w$ such that $w=w^R$, where $w^R$ denotes the reverse of $w$.
    We prove that $PAL(n)=\Oh(n^{1.5})$, where $PAL(n)$ denotes the number of  distinct 
palindromic substrings
in a tree of size $n$.
    This solves an open problem of Brlek, Lafrenière, and Provençal  (DLT 2015
\cite{Brlek}), who showed that $PAL(n)=\Omega(n^{1.5})$.
    Hence, we settle the tight bound of $\Theta(n^{1.5})$ for the maximum palindromic complexity of
    trees. For standard strings, i.e., for itrees which are simple paths, the palindromic complexity is exactly $n+1$.
    
    We also propose $\Oh(n^{1.5} \log{n})$-time algorithm for reporting all distinct palindromes 
     and $O(n\, polylog n)$ time algorithm for palindrom testing and finding the
longest palindrom in a tree.
    \end{abstract}
    
    \section{Introduction}
    Regularities in words are extensively studied in combinatorics and text algorithms.
    One of the basic types of such structures are palindromes: symmetric words, the ones 
which are the same when read in both directions. 
    The \emph{palindromic complexity} of a word is the number of distinct palindromic substrings in the word.
    An elegant argument shows that the palindromic complexity of a word of length $n$ does not exceed $n+1$~\cite{DBLP:journals/tcs/DroubayJP01},
    which is already attained by a unary word $\texttt{a}^n$.
    Therefore the problem of palindromic complexity for words is completely settled, and a natural next
    step is to generalize it to trees.
    
    In this paper, we consider the palindromic complexity of undirected trees with edges labelled by single letters.
    We define substrings of such a tree as the labels of simple paths between arbitrary two nodes.
    Each label is the concatenation of the labels of all edges on the path.
Denote by $pals(T)$ the set of all palindromic substrings of a tree $T$ and by 
$PAL(n)$ the maximum value of $pals(T)$ over all trees with $n$ edges.

    Fig.~\ref{fig:example} illustrates palindromic substrings in a sample tree.
    Note that palindromes in a word of length $n$ naturally correspond to palindromic substrings in a path of $n$ edges.

    \begin{figure}[ht]
        \begin{center}
        \begin{tikzpicture}
    \input{_fig_example.tex}

    \tikzstyle{stgreen} = [draw, circle, fill=black, minimum size = 6pt, inner sep = 0 pt, color=green!50!black]
    \tikzstyle{lgreen} = [draw, ultra thick, color=green!50!black]

    \node[stred] at (n7) {};
    \node[stred] at (n4) {};
    \node[stred] at (n2) {};
    \draw[lred] (n7)--(n6)--(n4)--(n2);

    \node[stgreen] at (n1) {};
    \node[stgreen] at (n3) {};
    \node[stgreen] at (n8) {};
    \node[stgreen] at (n12) {};
    \node[stgreen] at (n13) {};
    \draw[lgreen] (n1)--(n3)--(n6)--(n8)--(n12)--(n13);

\end{tikzpicture}
        \end{center}
        \caption{
            A sample tree $T$. We have $pals(T)=\{$
            {$a$}, {$b$}, {$c$},
            {$aa$},
            {$aca$},
            {$acaaca$},
            {$bcb$},
            {$bccb$},
            {$caac$},
            {$cbc$},
            {$cbcbc$},
            {$cc$}
            $\}$.
            An occurrence of a palindrome $aca$ is marked red, and an occurrence of a palindrome $cbcbc$ is marked green.
        }
        \label{fig:example}
    \end{figure}

    The study of the palindromic complexity of trees was recently initiated by Brlek, Lafrenière, and Provençal~\cite{Brlek}, who
    constructed a family of trees with $n$ edges containing $\Theta(n^{1.5})$ distinct palindromic substrings.
    They conjectured that there are no trees with asymptotically larger palindromic complexity and proved this claim 
    for a special subclass of trees.
    
    \paragraph{Our Result}
    We show that $PAL(n)= \Oh(n^{1.5})$. This bound is tight by the construction given in~\cite{Brlek};
    hence, we completely settle the asymptotic maximum palindromic complexity for trees.
    We also provide $\Oh(n^{1.5}\log n)$ algorithm for reporting all distinct palindromes
and $O(n\, polylog n)$ time algorithm for palindrom testing and finding the
longest palindrom in a tree.
    
    \paragraph{Related Work}
    Palindromic complexity of words was studied in various aspects. This includes
    algorithms determining the complexity~\cite{DBLP:journals/ipl/GroultPR10},
    bounds on the average complexity~\cite{DBLP:journals/dm/AnisiuAK10}, and generalizations to circular words~\cite{DBLP:journals/tcs/Simpson14}.
    Finite and infinite palindrome-rich words received particularly high attention; 

    see e.g.~\cite{DBLP:journals/ijfcs/BrlekHNR04,DBLP:journals/tcs/DroubayJP01,DBLP:journals/ejc/GlenJWZ09}.
    This class contains, for example, all episturmian and thus all Sturmian words~\cite{DBLP:journals/tcs/DroubayJP01}.
    
    Recently, some almost exact bounds for the number of distinct palindromes in star-like trees have been shown 
    by Glen et al.~\cite{GlenSS19}.
    Also the palindromes in directed trees have been studied by Funakoshi et al.
    \cite{DBLP:conf/stringology/FunakoshiNIBT19} who presented $\Oh(n\log h)$ time algorithm
    to compute all maximal palindromes and all distinct palindromes in a TRIE $T$ of height $h$.

    In the setting of labelled trees, other kinds of regularities were also studied. It has been shown that a tree with $n$ edges contains $\Oh(n^{4/3})$ distinct squares~\cite{DBLP:conf/cpm/CrochemoreIKKRRTW12} and $\Oh(n)$ distinct cubes~\cite{DBLP:journals/algorithmica/KociumakaRRW17}.
    Both bounds are known to be tight.
    Interestingly, the lower bound construction for squares resembles that for palindromes~\cite{Brlek}.

     \paragraph{Outline of the Paper}
     In \Cref{sec:preliminaries}, we introduce basic terminology and combinatorial toolbox.
     Next, in \Cref{sec:lower-bounds}, we quickly summarize previously known results
     for the lower bounds on the number of distinct palindromes.

     In \Cref{sec:spine-trees}, we introduce the special family of the trees
     called {\em spine trees} and prove th upper bounds for those trees.
     In \Cref{sec:double-trees}, we show how every tree can be decomposed into
     spine trees, and in the \Cref{sec:main}, we combine those results to obtain
     the upper bound on the number of distinct palindromes.

     In \Cref{sec:alg}, we introduce algorithmic toolbox and provide an algorithm for
     reporting all distinct palindromes.
    \section{Preliminaries}
    \label{sec:preliminaries}
    
    
    A word $w$ is a sequence of characters $w[1],w[2],\ldots,w[|w|]\in\Sigma$, often denoted $w[1..|w|]$.
    A substring of $w$ is any word of the form $w[i..j]$, and if $i=1$ ($j=|w|$), then it is called a prefix (a suffix, respectively).
    A period of $w$ is an integer $p$, $1\le p \le |w|$, such that $w[i]=w[i+p]$ for $i=1,2,\ldots,|w|-p$.
    The shortest period of $w$, denoted $\per(w)$, is the smallest such $p$.
\subsection{Some combinatorics of words}
    The following well known periodicity lemma, characterizes the properties of periods.

    \begin{lemma}[Periodicity Lemma~\cite{fine1965uniqueness}]
        \label{lem:periodicity}
        If $p$, $q$ are periods of a word $w$ of length $|w| \ge p + q - gcd(p, q)$,
        then $gcd(p, q)$ is also a period of $w$.
    \end{lemma}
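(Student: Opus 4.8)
The plan is to prove the statement by the classical Euclidean-style induction on $p+q$, reducing the pair of periods $(p,q)$ to the pair $(p-q,q)$, which mimics one step of the subtractive $\gcd$ algorithm. First I would assume without loss of generality that $p \ge q$, since both the hypothesis and the conclusion are symmetric in $p$ and $q$. If $p = q$, then $\gcd(p,q) = p$ is already a period, which settles the base case; so from now on I assume $p > q$.

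The core of the argument is to show that $d := p - q$ is itself a period of $w$, i.e.\ that $w[i] = w[i+d]$ for every $1 \le i \le |w|-d$. For a fixed such $i$ there are two natural ``detours'' connecting positions $i$ and $i+d$ through compositions of $\pm p$ and $\pm q$ jumps that must stay inside $[1,|w|]$: going up by $p$ and then down by $q$ (legal when $i+p \le |w|$), or going down by $q$ and then up by $p$ (legal when $i > q$). In either case the two periodicity relations compose to give $w[i] = w[i+d]$, and these two cases already cover every position except those lying in the window $(|w|-p,\, q]$.

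The hard part will be exactly this boundary window, and it is here that the tight length bound is used: its width equals $q - (|w|-p) = p + q - |w| \le \gcd(p,q)$ by hypothesis, so at most $\gcd(p,q)$ positions resist the one-step detour. I would resolve these by iterating the periodicity relations along longer admissible chains of $\pm p,\pm q$ jumps, which is cleanest to phrase as connectivity of the graph on $\{1,\dots,|w|\}$ whose edges join $i$ with $i+p$ and with $i+q$; the condition $|w| \ge p+q-\gcd(p,q)$ is precisely what forces each residue class modulo $\gcd(p,q)$ to form a single connected component, so that all positions in a class carry the same letter. Once $d = p-q$ is established as a period, I would close the induction by noting $\gcd(d,q) = \gcd(p,q)$ and that the length hypothesis transfers to the smaller pair, since $(p-q) + q - \gcd(p-q,q) = p - \gcd(p,q) \le |w|$; as $(p-q)+q < p+q$, the inductive hypothesis applies and yields that $\gcd(p,q)$ is a period of $w$.
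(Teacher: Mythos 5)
The paper offers no proof of this lemma at all: it is the Fine--Wilf theorem, quoted with a citation to the original source, so your proposal has to stand on its own rather than be measured against an in-paper argument. Its skeleton is the standard one and is correct as far as it goes: Euclidean induction on $p+q$ with base case $p=q$; the two one-step detours $i\to i+p\to i+p-q$ (legal for $i\le |w|-p$) and $i\to i-q\to i+p-q$ (legal for $i>q$), which establish $w[i]=w[i+p-q]$ for every $i$ outside the window $(|w|-p,\,q]$ of width $p+q-|w|\le\gcd(p,q)$; and the bookkeeping that closes the induction, since $\gcd(p-q,q)=\gcd(p,q)$ and $(p-q)+q-\gcd(p-q,q)=p-\gcd(p,q)\le|w|$.

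The gap is in how you dispose of the boundary window. You handle those positions by invoking the claim that the hypothesis $|w|\ge p+q-\gcd(p,q)$ forces each residue class modulo $\gcd(p,q)$ to be a single connected component of the graph with edges $\{i,i+p\}$ and $\{i,i+q\}$. But that claim \emph{is} the theorem: if it holds, then $w[i]=w[i+\gcd(p,q)]$ for all admissible $i$ and you are done immediately, with no need for the induction or the detours. The entire content of the lemma has been pushed into an unproved assertion, and the assertion cannot be discharged by a short local chain either: for $p=5$, $q=3$, $|w|=7$ the window is $\{3\}$, and the only chain joining position $3$ to position $5$ is $3\to6\to1\to4\to7\to2\to5$, which traverses the whole word; in general such chains have length $\Theta(|w|)$, so ``iterating the periodicity relations'' is the heart of the proof, not a finishing move. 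Two ways to repair this: either prove the connectivity claim directly (e.g.\ by its own Euclidean induction) and discard the rest of the argument, or use the classical fix --- show only that the \emph{prefix} $v=w[1..|w|-q]$ has period $p-q$ (there the detour $i\to i+p\to i+p-q$ always applies, because $i\le|v|-(p-q)=|w|-p$), apply the inductive hypothesis to $v$, note that $|v|\ge q$ so the prefix $u=w[1..q]$ has period $g=\gcd(p,q)$ with $g\mid q$, and conclude that $w$, being a prefix of $u^\infty$, has period $g$.
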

    
    The following lemma is a straightforward consequence of the Periodicity Lemma (\Cref{lem:periodicity}).

    \begin{lemma}\label{lem:periodicperiod}
    Suppose a word $v$ is a substring of a longer word $u$ 
which has a period $p\le \frac{1}{2}|v|$.
    Then $\per(u)=\per(v)$.
    \end{lemma}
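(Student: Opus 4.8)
The plan is to establish the two inequalities $\per(v)\le\per(u)$ and $\per(u)\le\per(v)$ separately. The first one is easy: since $v$ is a substring of $u$ and $u$ has period $p$, the word $v$ inherits the period $p$, so in particular $\per(u)\le p\le\frac12|v|<|v|$. As $\per(u)$ is then a period of $u$ that does not exceed $|v|$, it is also a period of $v$, giving $\per(v)\le\per(u)$. The whole difficulty lies in the reverse inequality, i.e.\ in showing that $q:=\per(v)$ is already a period of the \emph{entire} word $u$.

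To control $q$ I would first apply the Periodicity Lemma (\Cref{lem:periodicity}) to $v$. Both $p$ and $q$ are periods of $v$, and $q\le p$ by minimality, so $p,q\le\frac12|v|$ yields $|v|\ge p+q\ge p+q-\gcd(p,q)$. The lemma then makes $\gcd(p,q)$ a period of $v$, and minimality of $q$ forces $\gcd(p,q)=q$, i.e.\ $q\mid p$. The remaining and main task is to propagate the short period $q$ from the window occupied by an occurrence of $v$ to all of $u$.

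For the propagation I would use that, because $u$ has period $p$, each letter depends only on its position modulo $p$; write $u[i]=f(i\bmod p)$. Fix an occurrence $v=u[s+1..s+|v|]$ and slide the period relation $v[j]=v[j+q]$ across this window. Since $|v|\ge 2p\ge p+q$, the positions involved run through at least $p$ consecutive indices and hence hit every residue class modulo $p$; this gives $f(a)=f((a+q)\bmod p)$ for all $a$. Thus $f$ is invariant under the shift by $q$, and substituting back yields $u[i+q]=f((i+q)\bmod p)=f(i\bmod p)=u[i]$ for every valid $i$. Hence $q$ is a period of $u$, so $\per(u)\le q=\per(v)$, and together with the first step we conclude $\per(u)=\per(v)$.

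The delicate point is the counting in the propagation step: one has to check that a window of length at least $2p$ is long enough for the sliding period-$q$ relation to reach every residue class modulo $p$, which is exactly where the hypothesis $p\le\frac12|v|$ is used, and that shift-by-$q$ invariance of $f$ indeed transfers to a genuine period of $u$. Everything else reduces to a routine invocation of the Periodicity Lemma.
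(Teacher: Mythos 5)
Your proof is correct and follows essentially the same route as the paper's: both apply the Periodicity Lemma to $v$ to force $\per(v)$ to divide the long period, and then transfer the short period of $v$ to all of $u$. Your explicit residue-class propagation via $u[i]=f(i\bmod p)$ in fact fills in the one step the paper merely asserts (``but in such case $p_v$ is also a period of the whole word $u$''), so your version is, if anything, the more complete of the two.
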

    \begin{proof}
    Let us assume that $p_u=\per(u)$, $p_v=\per(v)$ and $p_u\not=p_v$.
    Since $v$ is a substring of $u$, and $p$ is a period of both $u$, 
    we have clearly that $$p_v \le p_u \le p \le \frac{1}{2}|v|.$$
    The word $v$ and periods $p_v$ and $p_u$ met the conditions of the Periodicity Lemma,
    so the $gcd(p_v, p_u)$ is also a period of $v$.

    Since $p_v$ is the minimal period, the $p_u=a\cdot p_v$ for some $a>1$.
    But in such case the $p_v$ is also a period of whole word $u$ --- contradiction.
    \end{proof}
    
    We have the following connection between periods and palindromes.
    \begin{observation}
    \label{obs:palperiod}
    Suppose a palindrome $v$ is a suffix of a longer palindrome $u$.
    Then $v$ is a prefix of $u$ and thus $|u|-|v|$ is a period of $u$ and of $v$.
    \end{observation}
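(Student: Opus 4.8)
The plan is to establish the two claims in order: first that $v$ is a prefix of $u$, and then to read off the period directly from the fact that $v$ occurs in $u$ both as a prefix and as a suffix.

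For the first step, I would exploit the interplay between reversal and the prefix/suffix relation. Since $v$ is a suffix of $u$, we may write $u = x\cdot v$, and reversing gives $u^R = v^R\cdot x^R$, so $v^R$ is a prefix of $u^R$. Now both words are palindromes, i.e.\ $u^R = u$ and $v^R = v$; substituting these equalities shows that $v$ is a prefix of $u$. This reversal argument is the only genuinely palindrome-specific ingredient, and it is exactly where the hypothesis that $u$ and $v$ are palindromes is used; everything afterwards is a standard prefix–suffix–period fact.

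For the second step, set $p = |u|-|v|$, noting $1 \le p \le |u|$ since $v$ is shorter than $u$. Because $v$ is a prefix of $u$ (occupying positions $1,\dots,|v|$) and also a suffix of $u$ (occupying positions $p+1,\dots,|u|$), for each index $i$ with $1 \le i \le |v|$ we have $u[i] = v[i] = u[i+p]$, which is precisely the assertion that $p$ is a period of $u$. Restricting these equalities to the prefix $v = u[1..|v|]$ yields that $p$ is also a period of $v$, since the defining equalities for $v$ form a subset of those already verified for $u$.

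I expect no serious obstacle here, as the result is elementary; the only point worth flagging is the degenerate regime $|u| > 2|v|$, in which the prefix and suffix occurrences of $v$ inside $u$ do not overlap and $p$ exceeds $|v|$, so the period condition for $v$ holds only vacuously. This is consistent with the statement and does not affect the argument.
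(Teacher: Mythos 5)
Your proof is correct and is exactly the standard argument this observation rests on: the paper states it without proof, and the reversal step ($u=xv \Rightarrow u=u^R=v^Rx^R=vx^R$) followed by reading off the period from the prefix/suffix occurrences of $v$ is the intended reasoning. Your remark about the regime $|u|>2|v|$ is a fair caveat (the paper's definition requires a period $p$ of $w$ to satisfy $p\le|w|$), but in the paper the observation is only ever applied to extract a period of the longer word $u$, where no issue arises.
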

    
   \subsection{Centroid decomposition}
For a tree $T$ and its node $r$ denote by $pals(T,r)$ the set of palindromic substrings of $T$
corresponding to simple paths containing the node $r$. 

\begin{lemma}\label{centroids}
If $|pals(T,r)|=O(|T|^{1+\epsilon})$, for $\epsilon>0$, then $PAL(n)=O(n^{1+\epsilon}$.
\end{lemma}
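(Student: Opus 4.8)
The plan is to combine the hypothesis with a \emph{centroid decomposition} of $T$. Recall that every tree $T$ with at least one edge has a centroid: a node $r$ whose removal splits $T$ into connected components $T_1,\dots,T_k$, each of size at most $\tfrac12|T|$. First I would fix such a centroid $r$ and classify the palindromic substrings of $T$ by whether the simple path realizing an occurrence passes through $r$. Every palindrome having an occurrence through $r$ lies in $pals(T,r)$ by definition. Every palindrome having an occurrence avoiding $r$ lies entirely inside one component $T_i$: a simple path not visiting $r$ is connected and uses no edge incident to $r$, so it is a path of some $T_i$. Since $pals(T)\subseteq pals(T,r)\cup\bigcup_i pals(T_i)$, this gives
\[
  |pals(T)| \;\le\; |pals(T,r)| + \sum_{i=1}^{k} |pals(T_i)|.
\]

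Next I would turn this into a recurrence for $PAL$. Writing $n_i=|T_i|$, the centroid property gives $n_i\le\tfrac12 n$ and $\sum_i n_i\le n$, while the hypothesis supplies a constant $c$ with $|pals(T,r)|\le c\,n^{1+\epsilon}$; together with $|pals(T_i)|\le PAL(n_i)$ this yields
\[
  PAL(n) \;\le\; c\,n^{1+\epsilon} + \sum_{i} PAL(n_i),\qquad n_i\le\tfrac12 n,\ \sum_i n_i\le n.
\]
I would then prove $PAL(n)\le C\,n^{1+\epsilon}$ by induction for a suitable constant $C$. The key estimate is a capped superadditivity bound: since each $n_i\le n/2$,
\[
  \sum_i n_i^{1+\epsilon} \;=\; \sum_i n_i\, n_i^{\epsilon} \;\le\; \Bigl(\tfrac{n}{2}\Bigr)^{\!\epsilon}\sum_i n_i \;\le\; 2^{-\epsilon}\, n^{1+\epsilon}.
\]
Substituting the inductive hypothesis into the recurrence gives $PAL(n)\le (c+2^{-\epsilon}C)\,n^{1+\epsilon}$, so the induction closes precisely when $c+2^{-\epsilon}C\le C$, that is, whenever $C\ge c/(1-2^{-\epsilon})$.

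The one point that genuinely relies on $\epsilon>0$, and which I regard as the crux, is the convergence of this geometric recursion: the factor $2^{-\epsilon}$ is strictly below $1$ exactly because $\epsilon>0$, which makes $1-2^{-\epsilon}$ positive and lets us select a finite $C$. (For $\epsilon=0$ the identical computation only gives the familiar extra logarithmic factor, as each of the $\Oh(\log n)$ levels of the centroid decomposition then contributes $\Oh(n)$.) The remaining details are routine: confirming the existence of the centroid, checking that component sizes measured in edges drop below $n/2$, and handling the base case of constant-size trees. I expect no real obstacle there; the decomposition cleanly isolates the entire combinatorial difficulty into the single-node bound $|pals(T,r)|$, which the later sections are devoted to establishing.
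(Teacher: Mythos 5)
Your proposal is correct and follows essentially the same route as the paper: fix a centroid $r$, split palindromes into those whose occurrence passes through $r$ (covered by the hypothesis) and those confined to a component of size at most $n/2$, and solve the resulting recurrence. The paper states the recurrence and asserts its solution, whereas you additionally carry out the induction with the $2^{-\epsilon}$ geometric-decay argument, which is a welcome (and accurate) extra level of detail.
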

\begin{proof}
We follow the approach from~\cite{DBLP:conf/cpm/CrochemoreIKKRRTW12}.
We use the folklore fact that 
every tree $T$ on $n$ edges contains a \emph{centroid} node $r$ such that every component of $T\setminus\{r\}$
is of size at most $\frac{n}{2}$.
We separately count palindromic substrings corresponding to the paths going through the centroid
$r$ and paths fully contained in a single component of $T\setminus\{r\}$. 
Finally, we obtain the following recurrence for $\pal(n)$, the maximum number of palindromes in a tree with $n$ edges:
\[
PAL(n) = \Oh(n^{1+\epsilon})+\max\left\{\sum_i PAL(n_i) : \forall_i\, n_i \leq \tfrac{n}{2}\: \text{and} \:  \sum_i n_i <n \right\}.
\]
It solves to $PAL(n)=\Oh(n^{1+\epsilon})$.
\end{proof}
    
    \subsection{D-Trees}
   For a tree $T$ and its node $r$ denote by $pals(T,r)$ the set of palindromic substrings of $T$
corresponding to simple paths containing the node $r$. 
We consider directed acyclic graphs, named D-trees,  such that $pals(T,r)$ is a subset of
palindromic strings corresponding to simple directed paths  in such graphs containg 
the node $r$.
    
    Define a {\em double tree}  $\DD=(T_{\ell},T_r,r)$ as a labelled tree consisting of
    two trees $T_{\ell}$ and $T_r$ sharing a common root $r$ but otherwise disjoint.
    The edges of $T_{\ell}$ and $T_r$ are directed to and from $r$, respectively.
    The size of $\DD$ is defined as $|\DD| = |T_{\ell}|+|T_r|$.

    For any $u,v\in\DD$, we denote by $path(u,v)$ the path drom $u$ to $v$ 
and by $\val(u,v)$ denote the sequence of the labels of edges on this path.
    We say that a path is palindromic if it corresponds to a palindromic word.
Denote by $pals(D)$ the set of palindromic substrings of a D-tree $D$.

A \emph{substring} of $\DD$ is any word $\val(u,v)$ such that $u\in T_{\ell}$ and $v\in T_r$.
    Let $$\dist(u,v)=|\val(u,v)| \ \mbox{and}\ \per(u,v)=\per(\val(u,v))$$
    
    We consider only \emph{deterministic} double trees ({\em D-trees}, in short), meaning that
    all the edges outgoing from a node have distinct labels, and similarly
    all the edges incoming into a node have distinct labels.
    An example of such a double tree is shown in Fig.~\ref{fig:combined}.

    \begin{figure}[ht]
    \begin{center}
    \includegraphics[width=\textwidth]{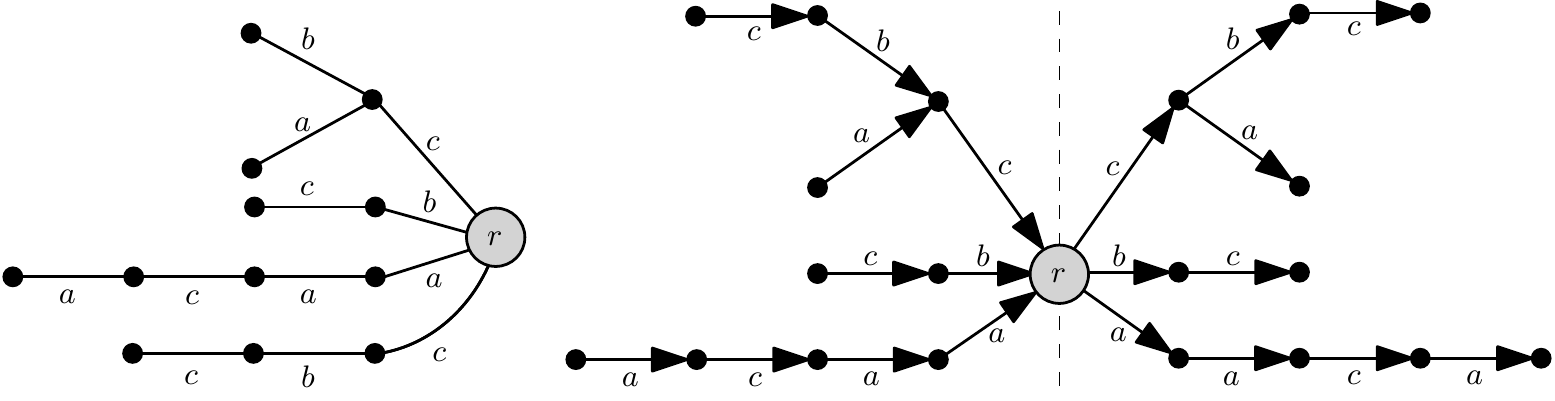}
    \end{center}
    \caption{To the left: an example undirected tree with $pals(T,r)$ 
containing $9$ palindromic substrings of length 2 or more
    $bcb,\; bccb,\; aca,\; cbc,\; caac,\; cc,\; cbcbc,\; aa,\; acaaca$.
    To the right: D-tree $\D(T,r)$ obtained after rooting the tree at $r$, merging
    both subtrees connected to $r$ with edges labelled by $c$, and duplicating the resulting tree.}
    \label{fig:combined}
    \end{figure}
    
\paragraph{\bf Trees $\Rightarrow$ D-Trees.}
For a tree $T$ and its node $r$ we construct a D-tree $\D(T,r)=(T_l,T_r,r)$ in the following way.

\medskip
We root $T$ at $r$ directing all the edges so that they point towards the root and then determinize
the resulting tree by gluing together two children of the same node whenever their edges have the same
label. Finally, we create a D-tree by duplicating the tree and changing the
directions of the edges in the second copy; see \cref{fig:combined} for a sample application of this process. 

It is easy to see that for any simple path from $u$ to $v$
going through $r$ in the original tree we can find $u'\in T_{\ell}$ and $v'\in T_r$ such that
$\val(u,v)=\val(u',v')$. It implies the following fact.
\begin{fact}\label{Psi} $pals(T,r)\subseteq pals(\D(T,r))$.
\end{fact}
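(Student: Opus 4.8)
The plan is to take an arbitrary palindrome $w\in pals(T,r)$, exhibit it as $\val(u',v')$ for some $u'\in T_{\ell}$ and $v'\in T_r$, and then read off $w\in pals(\D(T,r))$ directly from the definition of a substring of a D-tree. By definition of $pals(T,r)$, the word $w$ equals $\val(u,v)$ for a simple path in $T$ passing through $r$; since $T$ is rooted at $r$, this path splits at $r$ into an ascending part from $u$ up to $r$ (lying in one subtree of $r$) and a descending part from $r$ down to $v$ (lying in another subtree), so that $w=\val(u,r)\cdot\val(r,v)$. The claim thus reduces to finding $u'\in T_{\ell}$ with $\val(u',r)=\val(u,r)$ and $v'\in T_r$ with $\val(r,v')=\val(r,v)$: concatenating these two directed paths at the shared root yields a single directed path from $u'\in T_{\ell}$ to $v'\in T_r$ whose value is $w$, and as $w$ is palindromic by assumption, this places $w$ in $pals(\D(T,r))$.

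First I would isolate the only substantive point, namely that the determinization step preserves path-to-root values. Concretely, I claim there is a map $\phi$ from the nodes of $T$ (rooted at $r$) to the nodes of the determinized tree $S$ satisfying $\val(\phi(x),r)=\val(x,r)$ for every node $x$, and I would prove it by induction on the sequence of gluings. A single gluing identifies two siblings $c_1,c_2$ of a common parent $p$ whose edges to $p$ carry the same label $a$; since $\val(c_1,r)=a\cdot\val(p,r)=\val(c_2,r)$, the merge leaves this value unchanged, and a routine induction on depth shows the value $\val(d,r)$ of every descendant $d$ is likewise unaffected. Composing all gluings gives $\phi$, and the resulting $S$ is deterministic, so its nodes are in bijection with the distinct values $\{\val(x,r):x\in T\}$. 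Given $\phi$, the construction is immediate: $T_{\ell}$ is $S$ with edges oriented towards $r$, so the directed path from $u':=\phi(u)$ to $r$ reads $\val(\phi(u),r)=\val(u,r)$, while $T_r$ is $S$ with edges oriented away from $r$, so the directed path from $r$ to $v':=\phi(v)$ reads the reverse of $\val(\phi(v),r)=\val(v,r)$, which is exactly $\val(r,v)$. Hence $\val(u',v')=\val(u,r)\cdot\val(r,v)=\val(u,v)=w$.

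I expect the main obstacle to be a matter of care rather than depth. One must check that value-preservation survives \emph{cascading} merges (gluing $c_1$ with $c_2$ may force further gluings among their combined children), which the depth induction handles. One must also justify using $\phi(u)$ in the $T_{\ell}$ copy and $\phi(v)$ in the $T_r$ copy even when $u$ and $v$ sit in subtrees that determinization collapses onto each other; this is harmless precisely because the D-tree retains two independent copies of $S$ joined only at $r$, so the concatenated walk remains a simple directed path. Finally I would dispatch the boundary cases where $r$ is an endpoint of the path, i.e.\ $u=r$ or $v=r$: there $r\in T_{\ell}\cap T_r$ plays the role of $u'$ or $v'$ and the argument degenerates gracefully.
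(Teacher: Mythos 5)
Your proof is correct and follows the same route the paper takes: it establishes exactly the assertion preceding the fact, namely that any simple path through $r$ yields $u'\in T_{\ell}$, $v'\in T_r$ with $\val(u,v)=\val(u',v')$. The paper leaves this as ``easy to see''; your write-up merely supplies the details (value-preservation under determinization, the orientation of the two copies, and the boundary cases), all of which check out.
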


\section{Proof of $O(n^{1.5})$ Upper Bound }
Due to Lemma~\ref{centroids} it is enough to consider only palindromic paths passing through
a fixed node $r$ of the tree. Then, due to Fact~\ref{Psi} this is reduced to the 
estimation of palindroms in a D-tree, which is easier.

\smallskip
We consider a D-tree $(T_l,T_r,r)$.
A directed palindromic subpath $path(u',v')$ of $path(u,v)$ is called its {\it central part}
iff $dist(u,u')=dist(v',v)$ and $u'=r$ or $v'=r$.
The end-nodes of the central part are called {\it paired} nodes.

The crucial role in our proof of the upper bound play D-trees 
called spine-trees.
A \emph{spine-tree} is a D-tree with a distinguished path, called \emph{spine},
joining vertices $s_\ell\in T_{\ell}$ and $s_r\in T_r$. Additionally, we insist
that this path cannot be extended preserving the period $p=\per(s_\ell,s_r)$.

\begin{figure}[ht]
\begin{center}
\includegraphics[width=\textwidth]{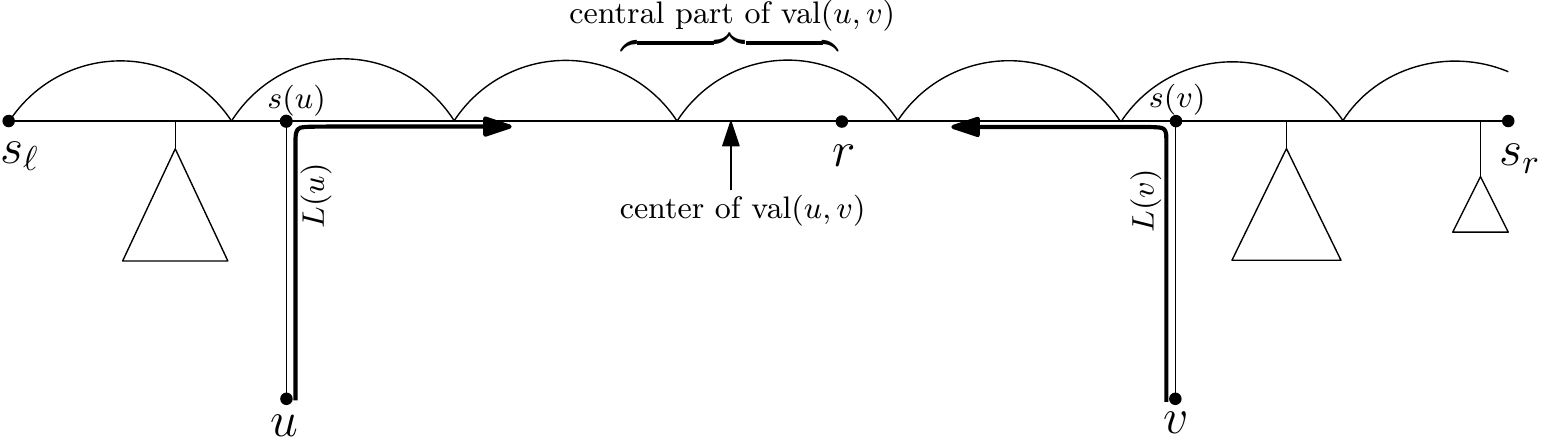}
\end{center}
\caption{A spine-tree, whose spine is the path from $s_\ell$ to $s_r$,
 with an induced palindrome $\val(u,v)$.
Observe that $L(u)=L(v)$ is a prefix of the palindrome. Note that $d(s(u),r)\ge p$ but $d(r,s(v))$ might be smaller than $p$.}
\label{fig:spine-tree}
\end{figure}

\medskip
By symmetry of the counting problem (up to edge reversal in a double tree), we consider later only
palindromic paths $path(u,v)$ such that
    $\dist(u,r)\ge \dist(r,v)$. The right end-point of the central part of each such path is the root.

\subsection{Combinatorial outline}
A palindromic substring is \emph{induced} by a spine-tree if its central part is a fragment of the spine
of length at least $p$, where $p$ is the period of the spine; 
see Fig.~\ref{fig:spine-tree} for an example.

\medskip The structure of the proof is described informally as follows.
\begin{itemize}
\item We show that the number of palindromes induced by a spine-tree
is $O(n^{1.5})$.
\item
We partition set of palindromes into so called {\it middle} palindromes and others. 
 The number of latter ones is easily estimated to be small.
\item 
The D-tree is partitioned into smaller D-trees, each with a distinguished spine.
The total size of all these D-subtrees is linear. 
\item 
Then we show that 
the set of all middle palindromes in a D-tree is a subset of the union of
palindromes induced by smaller spine D-subtrees.
\item 
Now the upper bound of all palindromes in a D-tree follows from
the upper bound on induced palindromes.
\end{itemize}

\subsection{Number of palindromes induced by a spine tree}
\label{sec:spine-trees}
For a node $u$ of the spine-tree, let $s(u)$ denote the nearest node of the spine
(if $u$ is already on the spine, then $u=s(u)$). Since the spine-tree is deterministic,
it satisfies the following property.
\begin{fact}\label{fct:deter}
For any induced palindrome $\val(u,v)$, the path $\val(s(u),s(v))$ is an inclusion--maximal
fragment of $\val(u,v)$ admitting period $p$. 
\end{fact}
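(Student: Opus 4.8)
The plan is to read off the structure of $\val(u,v)$ and then verify separately the two defining properties of an inclusion--maximal period-$p$ fragment: that $\val(s(u),s(v))$ has period $p$, and that neither of its endpoints can be pushed outward inside $\val(u,v)$ without destroying this period. First I would decompose the directed path from $u$ to $v$. Since $u\in T_{\ell}$ and $v\in T_r$, every edge on this path points forward, the path enters the spine at $s(u)$ and leaves it at $s(v)$, and the portion in between is exactly the sub-path of the spine from $s(u)$ through $r$ to $s(v)$. Hence
$$\val(u,v)=\val(u,s(u))\cdot\val(s(u),s(v))\cdot\val(s(v),v),$$
and $\val(s(u),s(v))$ is a contiguous factor of the spine word. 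Because the central part of the induced palindrome is the sub-path from $u'$ to $r$, which lies on the spine and is contained in $\val(s(u),s(v))$, this factor has length at least $p$; being a factor of a word of period $p$, it therefore has period $p$.

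It remains to prove maximality, i.e.\ that extending $\val(s(u),s(v))$ by one position to the left or to the right inside $\val(u,v)$ breaks period $p$. Consider the left end. If $u=s(u)$, then $\val(s(u),s(v))$ already begins at the first letter of $\val(u,v)$ and there is nothing to its left, so the left end is maximal by the word boundary. Otherwise $u\neq s(u)$, and the letter immediately preceding $\val(s(u),s(v))$ in $\val(u,v)$ is the label $c$ of the branch edge entering $s(u)$ from the $u$-side. Writing $S$ for the spine word and letting $S[j]$ be the spine edge entering $s(u)$, so that $\val(s(u),s(v))=S[j+1..]$, prepending $c$ preserves period $p$ iff $c$ equals the period-continuing letter $S[j+p]=S[j]$. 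I would then argue $c\neq S[j]$: if $s(u)\neq s_\ell$, the spine supplies the edge entering $s(u)$ with label $S[j]$, and since that edge and the branch edge are two distinct edges incoming to $s(u)$, determinism of the D-tree forces $c\neq S[j]$; if $s(u)=s_\ell$, there is no such spine edge, but then maximality of the spine (it cannot be extended preserving $p$) guarantees that no edge entering $s_\ell$ carries the period-continuing label, so again $c\neq S[j]$. Either way, prepending $c$ destroys period $p$.

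The right end is handled symmetrically with the orientation reversed: in $T_r$ the relevant edges are outgoing, so at $s(v)$ the spine-continuing edge and the branch edge to $v$ are two distinct edges leaving $s(v)$, and determinism (or, when $s(v)=s_r$, maximality of the spine) again forces the branch label to differ from the period-continuing letter, so appending it breaks period $p$; and if $v=s(v)$ the fragment already ends at the last letter of $\val(u,v)$. Combining the two ends with the period-$p$ property established above shows that $\val(s(u),s(v))$ is inclusion--maximal with period $p$ inside $\val(u,v)$, as claimed. I expect the main obstacle to be the bookkeeping that isolates exactly these two candidate extension edges and the clean split into the determinism case and the spine-maximality boundary case; once the orientation of edges in $T_{\ell}$ and $T_r$ is pinned down, each case reduces to a one-line application of the relevant property.
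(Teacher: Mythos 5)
Your proof is correct and follows exactly the route the paper intends: the paper states this as a \emph{Fact} with only the one-line justification ``since the spine-tree is deterministic,'' and your argument is the faithful elaboration of that hint --- periodicity of $\val(s(u),s(v))$ as a length-$\ge p$ factor of the spine word, plus determinism of incoming/outgoing edges at $s(u)$ and $s(v)$ to rule out period-preserving extensions. Your additional care with the boundary cases $s(u)=s_\ell$ and $s(v)=s_r$, handled via the maximality clause in the definition of the spine, is a detail the paper glosses over but is indeed the right (and necessary) complement to the determinism argument.
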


\begin{lemma}
There are up to $n\sqrt{n}$ distinct palindromic substrings induced by a given spine-tree of size $n$.
\label{lem:spine-trees}
\end{lemma}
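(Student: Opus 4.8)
The plan is to encode each induced palindrome by a pair \emph{(hair, middle)} and then bound the number of such pairs by a double-counting argument balanced at the threshold $\sqrt n$. First I would fix the canonical shape of an induced palindrome. Let $w=\val(u,v)$ be induced, with $\dist(u,r)\ge\dist(r,v)$, and let $s(u),s(v)$ be the spine nodes nearest to $u,v$. Writing $M=\val(s(u),s(v))$ and $L=L(u)$, the symmetry of $w$ forces $L(u)=L(v)$ (as in Fig.~\ref{fig:spine-tree}) and $M$ to be a palindrome, so that $w=L\,M\,L^{R}$. By \Cref{fct:deter}, $M$ is exactly the inclusion--maximal period-$p$ fragment of $w$; hence from the string $w$ alone one recovers $M$ (extend the central period-$p$ run maximally) and then $L$ (the remaining prefix). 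Thus the map $w\mapsto(L,M)$ is injective, and it suffices to bound the number of realizable pairs $(L,M)$. Crucially, $M$ is a palindromic substring of the single spine word $S=\val(s_\ell,s_r)$, whose length is at most $n$.

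For a hair string $L$, let $A_L$ (resp.\ $B_L$) be the set of spine nodes of $T_{\ell}$ (resp.\ $T_r$) at which a branch spells $L$. Since the tree is deterministic, from a fixed spine node each branch node yields a distinct hair string, so $\sum_L|A_L|=\sum_{x}|\mathrm{branch}(x)|\le n$ and likewise $\sum_L|B_L|\le n$. Writing $r_L$ for the number of realizable middles paired with $L$, I would use two complementary bounds. On one hand $M$ is determined by its endpoints $s(u)\in A_L$ and $s(v)\in B_L$ (each such endpoint pair spells a single path, hence a single string), so $r_L\le|A_L|\cdot|B_L|$. On the other hand every realizable $M$ is a palindromic substring of $S$, and a word of length at most $n$ has at most $n+1$ distinct palindromic substrings~\cite{DBLP:journals/tcs/DroubayJP01}, so $r_L\le n+1$.

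Combining these at the threshold $\sqrt n$ then finishes the proof. For hairs with $|A_L|\le\sqrt n$ I use $r_L\le|A_L|\,|B_L|\le\sqrt n\,|B_L|$, whence $\sum_{|A_L|\le\sqrt n}r_L\le\sqrt n\sum_L|B_L|\le n^{1.5}$. For hairs with $|A_L|>\sqrt n$ there are fewer than $n/\sqrt n=\sqrt n$ of them (as $\sum_L|A_L|\le n$), and each contributes $r_L\le n+1$, so their total is $\Oh(n^{1.5})$. Summing the two parts gives the claimed $\Oh(n^{1.5})$ bound, and one can check it is attained in spirit by the comb construction, where the $\sqrt n$ many long hairs each pair with $\Theta(n)$ middles.

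The step I expect to be the main obstacle is the first one: making the encoding $w\mapsto(L,M)$ genuinely injective and independent of the chosen occurrence. This is exactly where \Cref{fct:deter} and determinism are essential, as they guarantee that the central period-$p$ run is unambiguous and that the hair breaks the period precisely at $s(u)$, so the decomposition $w=L\,M\,L^{R}$ is forced by the string itself. The remaining care is bookkeeping: verifying that distinct realizable middle \emph{strings} inject into $A_L\times B_L$, and that the deterministic branch count yields $\sum_L|A_L|\le n$ without overcounting hairs of different lengths attached at the same spine node.
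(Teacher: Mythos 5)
Your proof is correct and follows essentially the same route as the paper: group the induced palindromes by the hair/label, bound each group by the minimum of a quadratic count of endpoint pairs and a linear-in-$n$ count of possible middles, and balance the two bounds. The only substantive differences are that you justify the linear bound via the $n+1$ palindromic-complexity bound applied to the spine word rather than by reconstructing the palindrome from its length together with the period-$p$ suffix of the label (both work), and that your threshold split at $\sqrt{n}$ yields roughly $2n^{1.5}$ rather than the exact $n\sqrt{n}$ obtained in the paper via $\min(x,y)\le\sqrt{xy}$, which is immaterial for how the lemma is used.
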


\begin{proof}
Define the \emph{label} $L(u)$ for a node $u\in T_{\ell}$ as the prefix of $\val(u,s_r)$ of length $\dist(u,s(u))+p$.
Similarly, the label $L(v)$ of a node $v\in T_r$ is the reversed suffix of $\val(s_\ell,v)$ of length $p+\dist(s(v),v)$.
We leave the label undefined if $\val(u,s_r)$ or $\val(s_\ell,v)$ is not sufficiently long,
i.e., if $d(s(u),s_r)<p$ or $d(s_\ell,s(v))<p$.

\smallskip
Consider a palindrome $\val(u,v)$ induced by the spine-tree. Fact~\ref{fct:deter} implies that 
that the fragment $\val(s(u), s(v))$
is a maximal fragment of $\val(u,v)$ with period $p$. Since the central part of the palindrome is of length at least
$p$ and lies within this fragment, the fragment must be symmetric, i.e., we must have $\dist(u,s(u))=\dist(s(v),v)$, and the labels of $u$ and $v$ are both defined. 

Consequently, $|L(u)|=|L(v)|$ and actually the labels $L(u)$ and $L(v)$ are equal.
Hence, to bound the number of distinct palindromes, we group together nodes with the same
labels. Let $V_L$ be the set of vertices of $T_\ell\cup T_r$ with label $L$.
We have the following claim.

\begin{claim}
For any label $L$,
there are at most $\min(|V_L|^2,n)$ distinct induced palindromes with endpoints in $V_L$.
\end{claim}
\begin{proof} (of the claim)\\
Consider all distinct induced palindromes $\val(u,v)$ such that $L(u)=L(v)=L$.
A substring is uniquely determined by the endpoints of its occurrence, so $|V_L|^2$ is an upper bound
on the number of these palindromes.

We claim that every such palindrome is also uniquely determined by its length, which immediately gives
the upper bound of $n$.
Indeed, $\dist(u,s(u))=\dist(s(v),v)=|L|-p$ and $\val(s(u),s(v))$ has period $p$,
so if the length is known, then $\val(s(u),s(v))$ can be recovered from its prefix of length~$p$, i.e., the suffix of $L$ of length $p$.
\end{proof}

The sets $V_L$ are disjoint, so by the above claim and using the inequality $\min(x,y)\le \sqrt{xy}$,
the number of distinct palindromes induced by the spine-tree is at most:
\[\sum_L \min(|V_L|^2,n)\le\sum_L \sqrt{|V_L|^2\cdot n} \le \sqrt{n}\cdot \sum_L |V_L| \le n^{1.5}.\qedhere\]
\end{proof}

\subsection{Number of all palindromes}
\label{sec:double-trees}

Consider a node $u\in T_{\ell}$ and all distinct palindromes $P_1,\ldots,P_k$ with an occurrence starting at $u$. 
Observe that their central parts $C_1,\ldots,C_k$ have distinct lengths:
indeed, $|P_i|=2\dist(u,r)-|C_i|$ and $\dist(u,r)\ge \frac12|P_i|$, so $\val(u,r)$ and $|C_i|$ determines the whole palindrome $P_i$.
Hence, we can order these palindromes so that $|C_1|>\ldots > |C_k|$, (i.e., $|P_1|<\ldots < |P_k|$).

Denote $\alpha=2\sqrt{n}$.
Palindromes $P_{2\alpha+1},\ldots,P_{k-\alpha}$ are called \emph{middle palindromes}.
There are $\Oh(\sqrt{n})$ remaining palindromes for fixed $u$ and $\Oh(n^{1.5})$ in total,
so we can focus on counting middle palindromes.  We start with the following characterization.

\begin{lemma}
Consider middle palindromes $P_i$ starting at a given node~$u$.
Central parts of these palindromes satisfy $|C_i|\ge \alpha$ and $\per(C_i)\le \frac12\sqrt{n}$.
Moreover, for each $P_i$ extending the central part $C_i$ by $\alpha$ characters in each direction preserves the shortest period.
\label{lem:characterization}
\end{lemma}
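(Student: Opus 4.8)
The plan is to reduce the whole statement to a question about the palindromic suffixes of a single word. Fix $u$ and set $x:=\val(u,r)$. For a palindrome $P_i=\val(u,v_i)$ with $\dist(u,r)\ge\dist(r,v_i)$, its central part $C_i$ is precisely the suffix of $x$ of length $|C_i|=\dist(u,r)-\dist(r,v_i)$, and it is a palindrome; moreover $\per(C_i)$ equals $|C_i|$ minus the length of the longest proper palindromic suffix of $C_i$, i.e.\ the gap to the next shorter palindromic suffix of $x$. Thus the $C_i$ form a sub-list (by strictly decreasing length) of the family of all palindromic suffixes of $x$, and I will argue about that full family. The backbone of everything is a \emph{convexity} property: if the palindromic suffixes of $x$ have lengths $\ell_1>\ell_2>\cdots$, then the gaps $g_s:=\ell_s-\ell_{s+1}=\per(\text{suffix of length }\ell_s)$ are non-increasing in $s$. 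This follows from Observation~\ref{obs:palperiod} by a short case analysis: the shorter palindrome $B$ is a prefix of the longer one $A$, so if $|B|\ge\tfrac12|A|$ then $B$ inherits the period $|A|-|B|$ of $A$ and hence $\per(B)\le\per(A)$, while if $|B|<\tfrac12|A|$ then $\per(B)\le|B|<|A|-|B|=\per(A)$.

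For the first two claims I would then argue as follows. Since the central parts have strictly decreasing distinct integer lengths and a middle palindrome has at least $\alpha$ shorter companions, we get $|C_i|\ge\alpha$. For the period bound, a middle palindrome has at least $2\alpha$ \emph{longer} palindromic suffixes above $C_i$; by convexity each of the corresponding $2\alpha$ gaps is at least $\per(C_i)$, and their sum telescopes to at most $|x|\le n$. Hence $2\alpha\,\per(C_i)\le n$, giving $\per(C_i)\le \tfrac{n}{2\alpha}=\tfrac14\sqrt n\le\tfrac12\sqrt n$.

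The third claim is the delicate one, and I expect it to be the main obstacle. Write $q=\per(C_i)$. The fragment obtained by extending $C_i$ by $\alpha$ in each direction stays inside $P_i$ (a middle palindrome has $\dist(r,v_i)\ge\alpha$), is centred at the centre of the palindrome $P_i$, and is therefore itself a palindrome; furthermore both of its extensions are governed by the same $\alpha$ characters of $x$ lying just above $C_i$, so it suffices to prove that the suffix of $x$ of length $|C_i|+\alpha$ still has period $q$. Equivalently, the maximal run of palindromic suffixes of $x$ sharing period $q$ must extend at least $\alpha$ characters \emph{above} $C_i$. This is exactly where middle-ness is essential: a lone palindrome may lose its small period at once (e.g.\ $a^m$ inside $b\,a^{m}\,b$), so the argument must exploit the abundance of companions. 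The key sub-lemma, proved from Observation~\ref{obs:palperiod} and the Periodicity Lemma~\ref{lem:periodicity}, is that if $D$ is a palindromic suffix of period $q$ and $Q$ is the next longer palindromic suffix, then either $\per(Q)=q$ (the run continues) or $\per(Q)>|D|$.

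I would then iterate. At step $j$ the relevant gap is one of the $2\alpha$ small gaps above $C_i$, so by the counting of the second paragraph it is at most $\tfrac{n}{2\alpha-j+1}<\sqrt n<\alpha\le|C_i|\le|D|$; the second alternative of the sub-lemma is thus impossible, and the run must continue with period $q$. After at most $\lceil\alpha/q\rceil\le\alpha$ steps the period-$q$ run has grown by at least $\alpha$ characters, so the suffix of $x$ of length $|C_i|+\alpha$ has period $q$. Since $q\le\tfrac12|C_i|$, Lemma~\ref{lem:periodicperiod} finally upgrades this to the desired statement that the shortest period of the whole extended fragment equals $q=\per(C_i)$. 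The only real work is this iteration controlling the run length, ensuring that the small period cannot break within $\alpha$ characters; the rest is bookkeeping.
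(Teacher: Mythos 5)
Your argument is essentially sound and reaches all three claims, but it is organized differently from the paper's proof, and one auxiliary claim is misstated. The paper never examines the full chain of palindromic suffixes of $x=\val(u,r)$; it works only with the central parts $C_1,\ldots,C_k$. For the period bound it pigeonholes over the first $\alpha$ gaps $|C_j|-|C_{j+1}|$ (each a period of $C_j$ by Observation~\ref{obs:palperiod}) to find some $j\le\alpha$ with $\per(C_j)\le\tfrac12\sqrt n$, transfers this to $C_\alpha$ as a suffix of $C_j$, and then gets $\per(C_i)=\per(C_\alpha)$ for every middle $i$ from Lemma~\ref{lem:periodicperiod}. For the extension it simply notes that $i>2\alpha$ forces $|C_i|<|C_\alpha|-\alpha$, so the left extension of $C_i$ by $\alpha$ is still a suffix of $C_\alpha$ and inherits its period; the right extension follows from the symmetry of $P_i$. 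You instead prove monotonicity (``convexity'') of the gap sequence of \emph{all} palindromic suffixes of $x$, bound $\per(C_i)$ directly by $n/(2\alpha)$, and then climb the suffix chain step by step to push the period-$q$ run $\alpha$ characters above $C_i$. Both routes work; the paper's single anchor $C_\alpha$ makes the extension a one-liner, while your convexity lemma is a clean reusable structural fact and gives a marginally sharper constant ($\tfrac14\sqrt n$).

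One statement needs repair: the sub-lemma ``either $\per(Q)=q$ or $\per(Q)>|D|$'' is false as written. Take $Q=\texttt{aabaaabaa}$ and $D=\texttt{aabaa}$: $D$ is the longest proper palindromic suffix of $Q$, with $q=\per(D)=3$, yet $\per(Q)=4$, which is neither $3$ nor greater than $|D|=5$. The correct dichotomy is ``either $\per(Q)=q$ or $\per(Q)>\tfrac12|D|$'', which is exactly Lemma~\ref{lem:periodicperiod} applied with the gap $|Q|-|D|=\per(Q)$ playing the role of the period $p$. Fortunately your iteration only ever invokes the sub-lemma when the relevant gap is below $\sqrt n=\tfrac12\alpha\le\tfrac12|C_i|\le\tfrac12|D|$, so the corrected version suffices and the rest of your argument goes through unchanged; but as stated the sub-lemma would not survive the proof you propose for it.
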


\begin{proof}
Since we excluded the $\alpha$ palindromes with the shortest central parts, the middle palindromes clearly have 
central parts of length at least $\alpha$.

\medskip \noindent
Let us now prove that $\per(C_{\alpha})\le \tfrac12\sqrt{n}$.

\smallskip \noindent
By Observation~\ref{obs:palperiod}, $|C_{j}|-|C_{j+1}|$ is a period of $C_{j}$ for $1\le j \le \alpha$.

\noindent
Since 
	$$\sum_{j=1}^{\alpha}(|C_{j}|-|C_{j+1}|) < |C_1|\le n,$$
for some $j$ we have 
	$$\per(C_j)\le |C_{j}|-|C_{j+1}|\le\tfrac12\sqrt{n}.$$
Moreover, $C_{\alpha}$ is a suffix of $C_j$, so the claim follows.

\medskip
For $i> 2\alpha$ (in particular, if $P_i$ is a middle palindrome), $C_i$ is a suffix of $C_{\alpha}$.
Additionally, for $2\alpha < i \le k-\alpha$ we can observe that $\per(C_{\alpha}) \le \frac{1}{2}{|C_i|}$
since $\per(C_{\alpha}) \le \frac12\sqrt{n}$ and $|C_i| \ge \alpha$.

Hence, we can apply Lemma~\ref{lem:periodicperiod} that implies $\per(C_i)=\per(C_{\alpha})$. 

\smallskip
\noindent
Moreover, 
$$|C_i|\le |C_{\alpha}|+\alpha-i< |C_{\alpha}|-\alpha,$$
so extending $C_i$ by $\alpha$ characters to the left preserves the period. 
By symmetry of $P_i$, the extension to the right also preserves the period. 
\end{proof}

\begin{figure}[ht]
\begin{center}
\includegraphics[width=\textwidth]{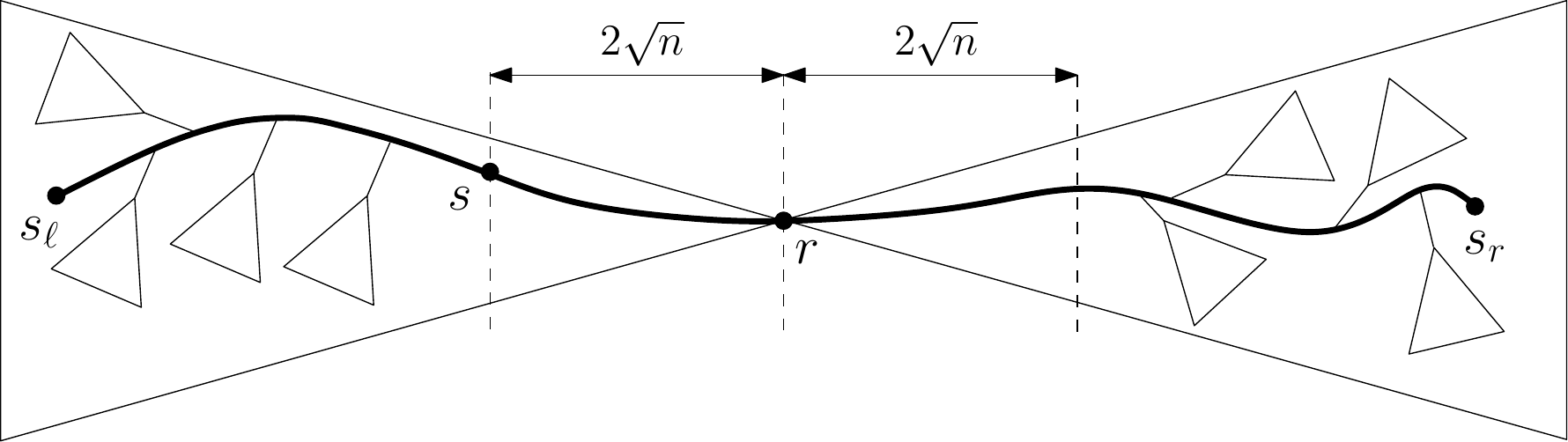}
\end{center}
\caption{A spine-tree constructed for a vertex $s$ in a D-tree.
Note that we do not attach subtrees at distance less than $\alpha$ from the root.}
\label{fig:decompose}
\end{figure}

Let us choose any $s\in T_{\ell}$ such that $$\dist(s,r)=\alpha \ \mbox{and}\ 
\per(s,r)\le \frac12\sqrt{n}.$$
Then, extend the period of $\val(s,r)$ to the left and to the right as far as possible, arriving at nodes $s_\ell$ and $s_r$, respectively. 

\medskip
We create a spine-tree with spine corresponding to the path
from $s_\ell$ to $s_r$ as shown in \cref{fig:decompose}. 
We attach to the spine all subtrees
hanging off the original path at distance at least $\alpha$ from the root.  In other words,
a vertex $u\in T_{\ell}$ which does not belong the spine is added to the spine-tree if $\dist(s(u),r)\ge \alpha$
and a vertex $v\in T_r$ --- if $\dist(r,s(v))\ge \alpha$. If $\dist(r,s_r)<\alpha$, then this procedure leaves no subtrees hanging in $T_r$
so we do not create any spine-tree for $s$. 

\smallskip
Now, let us consider a middle palindrome. By \cref{lem:characterization}, its central part satisfies $|C|\ge \alpha$ and 
$\per(C)\le \frac12\sqrt{n}$. Moreover, by \cref{lem:periodicperiod}, we have $\per(C)=\per(s,r)$ for the unique node $s\in T_{\ell}$ located within $C$
at distance $\alpha$ from the root. 

Consequently, $C$ lies on the spine of the spine-tree created for $s$, and $u$ belongs to a subtree attached to the spine. Additionally, since $C$ can be extended by $\alpha$ characters in each direction preserving the period, the other endpoint $v$ must also belong to such a subtree in $T_r$ (that is, we have $\dist(r,s(v))\ge \alpha$).
Hence, each middle palindromic substring is induced by some spine-tree. 

\smallskip
The spine-trees are not disjoint, but, nevertheless, their total size is small.
\begin{lemma}
The sizes $n_1,\ldots, n_k$ of the created spine-trees satisfy $\sum_i n_i \le 2n$.
\label{lem:sizes}
\end{lemma}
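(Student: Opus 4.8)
The plan is to show that every vertex at distance at least $\sqrt n$ from $r$ belongs to at most one spine-tree, and then to charge the ``near'' part of each spine (within distance $\sqrt n$ of $r$) against its own ``far'' part. First I would record the key separation property: if two created spine-trees both pass through a vertex $w$ with $\dist(w,r)\ge \sqrt n$, say $w\in T_\ell$, then they coincide. Let $Z=\val(w,r)$ be the $r$--$w$ path and let $p_i,p_j$ be the periods of the two spines. Each spine has period $p=\per(s_\ell,s_r)$, which by \Cref{lem:periodicperiod} equals $\per(s_i,r)\le\tfrac12\sqrt n$ since $\val(s_i,r)$ is a factor of the spine with $p\le\tfrac12|\val(s_i,r)|$. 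Applying \Cref{lem:periodicperiod} to $Z$ then gives $\per(Z)=p_i$: if $\dist(w,r)\ge\alpha$ then $\val(s_i,r)$ is a factor of $Z$, and otherwise $Z$ is a factor of $\val(s_i,r)$, and in both cases the hypothesis $p_i\le\tfrac12|v|$ holds for the relevant factor $v$. The same applied to the other spine yields $\per(Z)=p_j$, whence $p_i=p_j=:p$. As $\dist(w,r)\ge \sqrt n\ge p$, the path $Z$ already spans a full period, so the period-$p$ pattern of the spine is fully determined; since the D-tree is deterministic, the unique period-preserving extension from $Z$ through $r$ into $T_r$ and outward into $T_\ell$ reconstructs the entire spine. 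Hence the two spines are equal, so $s_i=s_j$ and the spine-trees coincide.

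Next I would extend this to all far vertices, including those in attached subtrees. A subtree vertex $u$, with $\dist(u,r)\ge\alpha$, has a distance-$\alpha$ ancestor $a_u$ that lies on the spine of every spine-tree containing $u$: each such spine contains the whole $r$--$s(u)$ path, and $\dist(a_u,r)=\alpha\le\dist(s(u),r)$, so $a_u$ sits on that path. Since $\dist(a_u,r)=\alpha\ge\sqrt n$, the separation property forces all these spine-trees to coincide. Therefore every vertex at distance at least $\sqrt n$ from $r$ — whether a spine vertex or a subtree vertex — belongs to at most one spine-tree, and consequently the total number of such ``far'' edges over all spine-trees is at most $n$.

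It remains to account for the ``near'' edges, those on a spine within distance $\sqrt n$ of $r$, which may be shared by many spine-trees. Here I would use that each spine-arm is long: on the $T_\ell$ side the arm runs from $r$ to $s_\ell$ with $\dist(s_\ell,r)\ge\alpha=2\sqrt n$, so its far portion beyond level $\sqrt n$ contains at least $\alpha-\sqrt n=\sqrt n$ edges, which is at least the number of its near edges; the same holds on the $T_r$ side, since we create the spine-tree only when $\dist(r,s_r)\ge\alpha$. Thus for each spine-tree I can inject its near edges into its own far edges. Because far edges belong to a unique spine-tree, each far edge then receives total charge at most two — once as itself and once as the image of a single near edge — while each near edge receives no direct charge. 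Summing gives $\sum_i(\mathrm{far}_i+\mathrm{subtree}_i)\le n$ and $\sum_i\mathrm{near}_i\le\sum_i\mathrm{far}_i\le n$, hence $\sum_i n_i\le 2n$.

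The main obstacle is the separation property of the first paragraph: making precise that a shared factor of length at least $\sqrt n$ forces the two periods to agree, and then that determinism pins down the whole spine from $Z$ and $p$. Once this is in place, the ancestor reduction for subtree vertices and the near/far charging are routine.
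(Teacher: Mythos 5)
Your proposal is correct and follows essentially the same route as the paper: the paper also argues that the vertex at distance $\sqrt n$ from $r$ on a spine determines the whole spine-tree (via $\per(c_i,r)=\per(s_i,r)$ plus determinism), so that at least $n_i-\alpha$ nodes of each spine-tree are private, and then uses $n_i\ge 2\alpha$ to conclude $\sum_i n_i\le 2\sum_i(n_i-\alpha)\le 2n$, which is exactly your near/far charging in different notation. Your write-up merely makes the uniqueness-of-period-preserving-extension step and the handling of attached subtrees more explicit than the paper does.
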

\begin{proof}
We claim that 
at least $n_i-\alpha$ nodes of the $i$th spine-tree are
disjoint from all the other spine-trees. 
Let $c_i$ be the node on the spine of the $i$th spine-tree such that $\dist(c_i,r)=\sqrt{n}$
and similarly let $s_i$ satisfy $\dist(s_i,r)=\alpha$. Recall that $\per(s_i,r)\leq \frac12\sqrt{n}$.
Thus, \cref{lem:periodicperiod} yields $\per(s_i,r)=\per(c_i,r)$. Since the tree is deterministic,
$c_i$ uniquely determines $s_i$ and hence the whole spine-tree. Thus, the nodes $c_i$ are all distinct and so are their
predecessors on the spines and all attached subtrees. 

\medskip A similar argument shows that all nodes $d_i$ on the spine
of the $i$th spine-tree such that $\dist(r,d_i)=\sqrt{n}$ are also all distinct.
Therefore, we proved $\sum_i n_i-\alpha \leq n$.

\medskip
Each spine-tree has at least $2\alpha$ vertices on the spine, so this yields $n_i\ge 2\alpha$, and thus we obtain
\[\sum_i n_i \leq 2\sum_i (n_i-\alpha)\leq 2n.\qedhere\]
\end{proof}

\begin{lemma}
\label{lem:decompose}
Every D-tree of size $n$ has $\Oh(n^{1.5})$ distinct palindromic substrings.
\end{lemma}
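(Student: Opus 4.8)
The plan is to assemble the pieces already in hand. First I would recall the partition of the distinct palindromic substrings into \emph{middle} palindromes and the rest. For each fixed starting node $u\in T_{\ell}$, only the first $2\alpha$ and the last $\alpha$ entries of the length-ordered list $P_1,\ldots,P_k$ fail to be middle, i.e.\ $\Oh(\sqrt{n})$ exceptions per node. Summing over the at most $n$ nodes of $T_{\ell}$, the non-middle palindromes number $\Oh(n^{1.5})$ in total, so they contribute only a term of the desired order and the entire burden falls on the middle palindromes.

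Next I would invoke the containment established by the decomposition: every middle palindromic substring is induced by one of the spine-trees produced by the procedure around the nodes $s$ with $\dist(s,r)=\alpha$ and $\per(s,r)\le\tfrac12\sqrt{n}$. Hence the set of distinct middle palindromes is contained in the union, taken over all created spine-trees, of the palindromes induced by each one. Bounding this union by the sum of cardinalities and applying \cref{lem:spine-trees} to the $i$th spine-tree, whose size is $n_i$, gives at most $\sum_i n_i^{1.5}$ middle palindromes.

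The only genuinely quantitative step is controlling $\sum_i n_i^{1.5}$, and this is where I expect the main difficulty to hide. The spine-trees overlap, so a disjointness argument is unavailable; what saves us is \cref{lem:sizes}, which supplies $\sum_i n_i \le 2n$, together with the trivial bound $n_i\le n$. Writing $n_i^{1.5}=n_i\sqrt{n_i}\le n_i\sqrt{n}$ and summing yields $\sum_i n_i^{1.5}\le \sqrt{n}\sum_i n_i\le 2n^{1.5}$. Adding the $\Oh(n^{1.5})$ contribution from the non-middle palindromes closes the bound. The subtle point is conceptual rather than computational: naively the total size of the (overlapping) spine-trees could be superlinear, but because each local factor $\sqrt{n_i}$ may be replaced by the global $\sqrt{n}$, the overlap becomes harmless and the exponent stays exactly at $1.5$.
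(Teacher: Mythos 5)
Your proposal is correct and follows the paper's own argument essentially verbatim: it combines the $\Oh(\sqrt{n})$-per-node bound on non-middle palindromes, the fact that every middle palindrome is induced by some spine-tree, \cref{lem:spine-trees}, and \cref{lem:sizes} via the estimate $n_i^{1.5}\le n_i\sqrt{n}$. The only difference is that you spell out the intermediate steps that the paper's two-line proof leaves implicit.
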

\begin{proof}
By \cref{lem:spine-trees}, the number of palindromes induced by the $i$th spine-tree is at most $n_i^{1.5}$.
Accounting the $\Oh(n^{1.5})$ palindromes which do not occur as middle palindromes, we have 
\[\Oh(n^{1.5})+\sum_i n_i^{1.5}\le \Oh(n^{1.5})+\sum_i n_i \sqrt{n} = \Oh(n^{1.5})\] palindromes in total.
\end{proof}

Due to Lemma~\ref{centroids} and Fact~\ref{Psi}  we obtain.
\begin{theorem}\label{thm:main}
A tree with $n$ edges contains $\Oh(n^{1.5})$ distinct palindromic substrings.
\end{theorem}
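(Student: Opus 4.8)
The plan is to assemble the three ingredients developed above into a single chain of estimates, with essentially no new combinatorial work. Fix an arbitrary node $r$ of the tree $T$. The first step is to pass from the undirected tree to its associated D-tree: by \cref{Psi} every palindromic substring corresponding to a simple path through $r$ already occurs in $\D(T,r)$, so $|pals(T,r)| \le |pals(\D(T,r))|$. Before relying on this, I would first confirm that the construction of $\D(T,r)$ keeps the size linear: determinization only glues together children with identical edge labels and hence never increases the node count, while the final duplication at most doubles it. This gives $|\D(T,r)| \le 2|T|$, and it is precisely this linear bound that lets the forthcoming $n^{1.5}$ estimate transfer back to $T$ without loss.

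The second step is to invoke \cref{lem:decompose}, which bounds the number of distinct palindromic substrings of any D-tree of size $m$ by $\Oh(m^{1.5})$. Combined with the size estimate, this yields
\[
|pals(T,r)| \le |pals(\D(T,r))| = \Oh(|\D(T,r)|^{1.5}) = \Oh(|T|^{1.5}).
\]
In the language of \cref{centroids}, we have thereby verified the hypothesis $|pals(T,r)| = \Oh(|T|^{1+\epsilon})$ with $\epsilon = \tfrac12$, and this holds for every choice of the node $r$.

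The final step is simply to feed this into the centroid recurrence of \cref{centroids}. With $\epsilon = \tfrac12$ its hypothesis is exactly what we have just established, so its conclusion immediately gives $PAL(n) = \Oh(n^{1.5})$, which is the claimed bound; here the choice of $r$ as a centroid is what makes the recurrence solve to the desired order rather than merely bounding the paths through one fixed vertex.

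I expect all the genuinely substantive content to reside in the already-proved \cref{lem:decompose} (and, transitively, in \cref{lem:spine-trees} and \cref{lem:characterization}); the remaining argument is routine bookkeeping that glues the pieces together. The one point I would treat carefully is the linear size bound $|\D(T,r)| = \Oh(|T|)$, since the entire reduction collapses if determinization or duplication could inflate the size super-linearly — but both operations are manifestly size-nonincreasing and size-doubling, respectively, so this obstacle is minor and dispatched in a line.
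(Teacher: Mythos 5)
Your proposal is correct and follows exactly the paper's own route: the paper derives \cref{thm:main} by combining \cref{Psi}, \cref{lem:decompose}, and \cref{centroids} in precisely this order. Your explicit check that $|\D(T,r)|\le 2|T|$ is a small detail the paper leaves implicit, but it does not change the argument.
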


\section{Algorithm Reporting All Distinct Palindromes}
\label{sec:alg}

In this section, we consider following problem for palindromes in trees:

\begin{problem}[\ReportAllPalindromes]
    Given tree a $T$ with $n$ edges, each labelled by single character from the alphabet $\Sigma$
	report all distinct palindromes in $T$.
\end{problem}

There are various ways for reporting palindromes,
the natural choice is to represent each palindrome
as a pair for nodes $(u,v)$ such that $val(u,v)$ is a palindrome.
Unfortunately for efficiency reasons we would like to use
slightly different format, each palindrome will
be reported as triple $(\ell, u, v)$, such
that $\ell$ is length of a palindrome, $val(u,v)$ is a first half of a palindrome
($\left\lceil \frac{\ell}{2}\right\rceil=|val(u,v)|$).

\noindent
To simplify the description of the algorithm and introduce
restricted version of the problem:

\begin{problem}[\ReportAllEvenPalindromes]
    Given tree $T$ with $n$ edges, each labelled by single character from the alphabet $\Sigma$
    report all distinct even palindromes in $T$.
\end{problem}

\noindent
The following lemma states that in fact
the problem $\ReportAllEvenPalindromes$
is equivalent to the problem $\ReportAllPalindromes$.

\begin{lemma}\label{all-to-even}
    Given an algorithm for problem $\ReportAllEvenPalindromes$ running in time $f(n)$,
    it is possible to solve problem $\ReportAllPalindromes$ in $f(O(n))$ time.
\end{lemma}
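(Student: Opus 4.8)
The plan is to reduce $\ReportAllPalindromes$ to $\ReportAllEvenPalindromes$ through a \emph{doubling} transformation of the tree. Given $T$, I would build a tree $T'$ by subdividing every edge: an edge $\{a,b\}$ labelled $c$ is replaced by a path $a - m - b$ through a fresh \emph{midpoint} vertex $m$, with both new edges again labelled $c$. Then $|T'| = 2n = \Oh(n)$, and for every path in $T$ spelling a word $w = w_1\cdots w_\ell$ the corresponding path in $T'$ spells the \emph{doubling} $\overline{w} = w_1 w_1 w_2 w_2 \cdots w_\ell w_\ell$. Since $\overline{w}^R = \overline{w^R}$, the word $\overline{w}$ is a palindrome if and only if $w$ is, and $\overline{w}$ always has even length. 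This suggests that palindromes of $T$ correspond to even palindromes of $T'$, so the task reduces to running $\ReportAllEvenPalindromes$ on $T'$ in time $f(\Oh(n))$ and translating its output back.

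The algorithm I would run is as follows. First list all \emph{unary} palindromes $c^1,\dots,c^{L_c}$ directly, where $L_c$ is the length of the longest monochromatic path of colour $c$; this takes $\Oh(n)$ time and, since $\sum_c L_c \le n$, produces $\Oh(n)$ triples. Then run $\ReportAllEvenPalindromes$ on $T'$, obtaining every distinct even palindrome of $T'$ as a triple $(2\ell, u', v')$ whose first half is $\val(u',v')$. For each such triple I keep it iff its endpoint $u'$ is an \emph{original} vertex of $T$ (not a midpoint), and in that case I rewrite it as a length-$\ell$ palindrome of $T$: the projected endpoint is $u'$ itself and the projected centre is recovered from $v'$ in $\Oh(1)$ (it is $v'$ when $\ell$ is even, and the original-tree neighbour of the midpoint $v'$ on the far side when $\ell$ is odd). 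Each triple is processed in constant time, so the whole reduction costs $f(\Oh(n)) + \Oh(n)$.

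For correctness I would prove both inclusions. Every path of $T'$ starting at an original vertex has even length exactly when it ends at an original vertex, and such a path always spells a doubling $\overline{w}$; hence if it is an even palindrome then $w$ is a palindrome occurring in $T$, and halving recovers it. Conversely each $s \in pals(T)$ yields the even palindrome $\overline{s}$ of $T'$ occurring between original vertices, so it is reported with an original-vertex endpoint $u'$ and survives the filter. The subtle point, which I expect to be the main obstacle, is that $T'$ also contains even palindromes that are \emph{not} doublings: a path between two midpoints spells a word of the form $c_0\, c_1 c_1 \cdots c_{j-1}c_{j-1}\, c_j$, and one must check that such a word is a doubling only in the degenerate all-equal (unary) case. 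Consequently a non-unary palindrome $\overline{s}$ has only original-to-original occurrences in $T'$, which guarantees that the representative reported by $\ReportAllEvenPalindromes$ carries an original-vertex endpoint; the only doublings with stray midpoint occurrences are the unary ones, already enumerated separately. This case analysis, together with verifying the $\Oh(1)$ format translation in the odd-length case, is the part that needs care, while the remainder is routine.
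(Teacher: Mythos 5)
Your reduction is correct, but it takes a different route from the paper's. The paper also subdivides edges to force even length, but it replaces each edge labelled $c$ by a length-4 path labelled $\$,c,c,\$$ with a fresh sentinel $\$\notin\Sigma$, and then filters the output of $\ReportAllEvenPalindromes$ by a purely content-based test: keep exactly the palindromes of length $4k$ that start (and hence end) with $\$$. The sentinels force every such palindrome to align with original vertices, so no occurrence analysis and no special treatment of unary palindromes is needed. Your version uses the leaner 2-subdivision with no new letter, which makes the filter occurrence-based (keep a triple iff its reported endpoint is an original vertex), and this is exactly why you need your key combinatorial observation — that a midpoint-to-midpoint word $c_0c_1c_1\cdots c_{j-1}c_{j-1}c_j$ is a doubling only when all letters coincide — together with the separate $\Oh(n)$ enumeration of unary palindromes. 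That observation is correct (matching positions forces $c_0=c_1=\cdots=c_j$), and it also covers the reported first half itself: for non-unary $s\in pals(T)$ the first half of $\overline{s}$ is non-unary (else, $s$ being a palindrome, $s$ would be unary), so every occurrence of it starts at an original vertex and the filter necessarily retains $\overline{s}$. What each approach buys: yours keeps the alphabet unchanged and the auxiliary tree at $2n$ rather than $4n$ edges, at the price of the case analysis; the paper's is essentially verification-free. Two small points to tidy up in yours: a unary doubling $c^{2m}$ may also be reported with an original endpoint, so it can be output both by the filter and by your explicit unary list — either exclude unary palindromes in the filter or deduplicate at the end (the paper's overall algorithm sorts and deduplicates anyway); and the $\Oh(1)$ recovery of the ``far-side'' neighbour of a midpoint in the odd-$\ell$ case deserves a sentence (e.g., it is the neighbour of $v'$ at distance $\ell+1$ from $u'$, answerable with the $\distFunc$/LCA machinery of Lemma~\ref{lem:tools_linear}). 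Neither affects the asymptotic bound $f(\Oh(n))+\Oh(n)$.
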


\begin{proof}
    Given an instance $T$ of the problem $\ReportAllPalindromes$, we can generate tree $T'$
    by replacing each edge $(u,v)$ with label $c\in \Sigma$ from $T$ by a
    path of length 4 with corresponding labels $\$,c,c,\$$ (where $\$$ is a character not in $\Sigma$).
    Each palindrome in $T$ has a corresponding palindrome in $T'$.
    Also each palindrome of length $4k$ that starts (and ends) with character $\$$ in $T'$ can be attributed
    to corresponding palindrome in $T$.
    In consequence we can solve the problem $\ReportAllEvenPalindromes$ for $T'$ and report only those
    palindromes in $T$ that corresponds to the palindromes of length $4k$ that start with character $\$$.
\end{proof}

\subsection{Algorithmic tools}

Before we describe the algorithm we introduce the algorithmic toolbox used in our algorithm.
It is similar to the one described in the Section 3 from~\cite{DBLP:journals/tcs/KociumakaPRRW14},
but it is tailored to the palindromic case.

\begin{lemma}
    \label{lem:tools_linear}
    Given a family of D-trees $D_1,\ldots,D_k$ with total $n$ nodes.
    It can be preprocessed in $\Oh(n)$ time, such that following operations can be done in $\Oh(1)$ time:
    \begin{itemize}
        \item $\distFunc(u, v)$ -- distance between nodes $u$ and $v$,
        \item $\upFunc(u, h)$ -- node $v$ on a path from $u$ towards the root, at distance $h$ from $u$,
        \item $\centerFunc(u, v)$ -- node at the center of path from $u$ and $v$,
        \item $\isAncestor(u, v)$ -- is $v$ an ancestor of $u$,
        \item $\perLen(u)$ -- length of the period of word on path from root to $u$
    \end{itemize}
\end{lemma}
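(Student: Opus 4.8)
The plan is to treat every D-tree $D_i$ as an ordinary rooted tree (rooted at $r$, with $T_{\ell}$ and $T_r$ glued at $r$), preprocess each one independently, and store the results in arrays indexed by nodes. Since the $D_i$ contain $n$ nodes in total, an $\Oh(|D_i|)$-time preprocessing per tree sums to $\Oh(n)$. The first four operations are classical tree data-structure queries, so I would reduce each of them to the two standard primitives that admit $\Oh(n)$-time construction and $\Oh(1)$-time queries: lowest common ancestors (LCA) and level ancestors.

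For $\distFunc$, $\centerFunc$ and $\isAncestor$ I would precompute, in a single DFS, the value $\mathrm{depth}(u)$ of every node together with its Euler-tour entry/exit times. Then $\isAncestor(u,v)$ is the interval test on entry/exit times; $\distFunc(u,v)=\mathrm{depth}(u)+\mathrm{depth}(v)-2\,\mathrm{depth}(\mathrm{lca}(u,v))$; and $\centerFunc(u,v)$ is obtained by computing $d=\distFunc(u,v)$, deciding from $\mathrm{lca}(u,v)$ on which side of the path the midpoint lies, and retrieving it with one level-ancestor query. The operation $\upFunc(u,h)$ is exactly a level-ancestor query, and here I would install the Bender--Farach-Colton level-ancestor structure (ladder decomposition plus jump pointers), answering queries in $\Oh(1)$ after $\Oh(n)$ preprocessing. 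This settles everything except $\perLen$.

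The operation $\perLen$ is the crux. The word spelled by the root-to-$u$ path is $\val(r,u)$, whose shortest period equals $\mathrm{depth}(u)-\beta(u)$, where $\beta(u)$ is the length of the longest proper border of $\val(r,u)$. Because the trees are deterministic, prefixes of $\val(r,u)$ correspond exactly to ancestors of $u$, so $\beta(u)$ is realised by the deepest proper ancestor $w$ of $u$ with $\val(r,w)$ a suffix of $\val(r,u)$; this is the trie analogue of the Morris--Pratt failure function. I would compute it in the spirit of the toolbox of~\cite{DBLP:journals/tcs/KociumakaPRRW14}: first build the Aho--Corasick failure links (longest proper suffix that is itself a node), which a depth-ordered scan produces in $\Oh(n)$ total time, and then extract from each failure chain the deepest element that is genuinely an ancestor of $u$, using the already-built ancestry and level-ancestor information.

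The main obstacle is precisely this last extraction: a naive descent of the failure chain at each node is \emph{not} linear, because the border length need not decrease monotonically along root-to-leaf paths and the usual amortized Morris--Pratt argument breaks under branching (a high-border node with many children would force repeated failure-link traversals). Overcoming this is where the structure of the failure-link tree must be exploited --- characterising the longest border as the deepest node that is simultaneously an ancestor of $u$ in the trie and in the failure-link (suffix-link) tree --- so that all values $\beta(u)$, and hence all periods, are produced in $\Oh(n)$ total time rather than per query. Once $\beta$ is tabulated, each $\perLen(u)$ is a single subtraction, completing the $\Oh(1)$-query, $\Oh(n)$-preprocessing claim.
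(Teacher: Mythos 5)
Your proof takes essentially the same approach as the paper's: depths plus LCA for $\distFunc$, level-ancestor queries for $\upFunc$, their combination for $\centerFunc$, Euler-tour intervals for $\isAncestor$, and the linear-time border/failure-link computation on tries (the toolbox of Kociumaka et al.~\cite{DBLP:journals/tcs/KociumakaPRRW14}) for $\perLen$. The paper's proof is in fact terser---it simply cites that reference for the border array---whereas you additionally and correctly pinpoint why the naive Morris--Pratt amortization breaks under branching and how the failure-link-tree/ancestor characterization repairs it.
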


\begin{proof}
    Queries $\distFunc(u, v)$ can be implemented by precomputing depth of each node in a tree
    and using Lowest Common Ancestor Queries (LCA)~\cite{DBLP:journals/siamcomp/HarelT84}.
    Query $\upFunc(u, h)$ is in fact Level Ancestor Query (LA)~\cite{DBLP:journals/tcs/BenderF04}.
    Operation $\centerFunc(u, v)$ can be realized by one $\distFunc$ and $up$ query.
    Length of periods $\perLen(u)$ can be calculates from the border array $P$ that can be computed
    in $\Oh(n)$ time (\cite{DBLP:journals/tcs/KociumakaPRRW14}).
\end{proof}

\begin{lemma}
    \label{lem:tools_nlogn}
    Given a family of D-trees $D_1,\ldots,D_k$ with total $n$ nodes.
    It can be preprocessed in $\Oh(n\log n)$ time, such that following operations can be done in $\Oh(1)$ time:

    \begin{itemize}
    \item $\labelFunc(u, v)$ -- pair of integers representing word $val(u, v)$
        (this operation is defined only for $u$ being ancestor of $v$ or $v$ being ancestor of $u$),
    \item $\isEqual(u_1, v_1, u_2, v_2)$ -- is $val(u_1, v_1)=val(u_2, v_2)$,
    \item $\isPalindrome(u, v)$ -- is word $val(u, v)$ a palindrome?,
    \item $\existsFunc(D_i, u, v)$ -- for $D_i=(L_i, R_i, r_i)$ it verifies if there
        exist a node $w\in R_i$ such that $val(w, r_i)=val(u, v)$
        (this operation is defined only for $u,v\in L_i$ and $u$ being ancestor of $v$ or $v$ being ancestor of $u$),
    \item $\childFunc(u, c)$ -- returns child node of $u$ with label $c$ (or null values if it does not exist)
    \end{itemize}
\end{lemma}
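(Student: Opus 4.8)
The plan is to port the \emph{dictionary of basic factors} (the Karp--Miller--Rosenberg naming) from strings to the forest of D-trees, using the constant-time navigation primitives of \Cref{lem:tools_linear} to locate blocks along tree paths. First I would assign, to every node $u$ and every $k$ with $2^k \le \dist(u,\proot)$, an integer \emph{name} of the length-$2^k$ block that ends at $u$ when its root path is read toward the root, maintaining the invariant that two blocks get the same name exactly when they spell the same word. The names are computed level by level: the level-$0$ names are the (relabelled) edge characters, and the level-$(k{+}1)$ name of a block is obtained by pairing the level-$k$ name of its first half with the level-$k$ name of its second half --- the midpoint being reached by a single $\upFunc$ query --- and then replacing the resulting pairs by consecutive integers via one radix sort. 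Since there are $\Oh(\log n)$ levels and each level does $\Oh(n)$ work, the whole naming takes $\Oh(n\log n)$ time. To also handle reversed readings (needed for palindromes), I would run the same construction a second time on the forest whose edge labels are read in the opposite direction.

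Then I would implement $\labelFunc(u,v)$ for a monotone path (say $u$ an ancestor of $v$) by the standard overlapping-block trick: let $2^k$ be the largest power of two not exceeding $\dist(u,v)$; the word $\val(u,v)$ is determined by the name of its length-$2^k$ prefix together with the name of its length-$2^k$ suffix, where the prefix block ends at $\upFunc(v,\dist(u,v)-2^k)$ and the suffix block ends at $v$. Returning the triple $(\dist(u,v),\,\text{prefix name},\,\text{suffix name})$ yields the promised pair of integers in $\Oh(1)$ time, and $\isEqual$ is then a comparison of two such triples. For $\isPalindrome(u,v)$ I would compare the forward label of the path against its reversed label obtained from the second (reverse-direction) naming, so that $\val(u,v)=\val(u,v)^R$ is checked in constant time.

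For $\existsFunc(D_i,u,v)$ I would first make the labels \emph{canonical single integers}: a global radix sort assigns each label-triple arising in the construction a distinct rank, so that identical words anywhere in the forest --- in any $L_j$, $R_j$, forward or reversed --- share one identifier. During preprocessing I attach to each D-tree $D_i=(L_i,R_i,r_i)$ a static dictionary (built by sorting, or by perfect hashing) containing the identifiers of all words $\val(w,r_i)$ for $w\in R_i$; then $\existsFunc$ reduces to computing the identifier of $\val(u,v)$ and testing its membership in that dictionary, which is $\Oh(1)$. Finally, $\childFunc(u,c)$ is supported by relabelling $\Sigma$ to $\{1,\ldots,n\}$ and equipping each node with a static perfect hash (or a globally sorted edge list with per-node offsets) from label to child, built in $\Oh(n\log n)$ overall. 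The main obstacle is the first step: correctly transplanting the level-by-level naming onto tree paths so that the block midpoints are reached by level-ancestor queries and so that names stay globally consistent across all D-trees and across both reading directions; once that canonical naming is in place, every listed operation is a short constant-time computation on top of \Cref{lem:tools_linear}.
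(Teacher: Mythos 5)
Your proposal is correct and follows essentially the same route as the paper: both build a Dictionary of Basic Factors (Karp--Miller--Rosenberg naming) over root paths in the D-tree forest in $\Oh(n\log n)$ time, augment it with names for reversed readings to support $\isPalindrome$, and implement $\existsFunc$ by storing the codes of all words $\val(w,r_i)$ in a static constant-lookup dictionary. You merely spell out the level-by-level construction, the overlapping-block trick, and the handling of $\childFunc$ in more detail than the paper does.
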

\begin{proof}
    Operations $\labelFunc$, $\isEqual$ and $\isPalindrome$ can be implemented using
    Dictionary of Basic Factors (DBF)~\cite{Jewels}, which clearly requires $\Oh(n\log n)$ preprocessing time.
    The only extension is that we need is that for each basic factor we also store code of its reversed version.
    For operation $\existsFunc$ we store DBF codes of all possible values of $val(r, w)$ in a static dictionary
    with constant lookup time.
\end{proof}

\subsection{Algorithm for spine trees}

\begin{lemma}
    \label{lem:spine-decomposition}
    For a double tree $D=(T_l, T_r, r)$ with $n$ nodes, the spine decomposition
    can be calculated in $\Oh(n)$ time, assuming that the $D$ has been preprocessed with Lemma \ref{lem:tools_linear}.
\end{lemma}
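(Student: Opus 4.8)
The plan is to mirror the combinatorial construction of \Cref{sec:double-trees} algorithmically, creating one spine-tree per valid starting node and charging all work to the size bound of \Cref{lem:sizes}. First I would fix $\alpha = 2\sqrt{n}$ and, in a single traversal of $T_\ell$, enumerate all candidate starting nodes, i.e.\ the nodes $s\in T_\ell$ with $\dist(s,r)=\alpha$ and $\per(s,r)=\perLen(s)\le\tfrac12\sqrt{n}$; each test costs $\Oh(1)$ under the preprocessing of \Cref{lem:tools_linear} (recall that reversing a word preserves its periods, so $\perLen(s)$ is indeed $\per(s,r)$). Alongside this I would build, in $\Oh(n)$ time, a child-by-label lookup at every node (the edge labels are part of the input), so that from any node one can follow the edge carrying a prescribed label in $\Oh(1)$ time.

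For each valid $s$ with $p=\per(s,r)$, I would recover the spine by extending the period $p$ maximally in both directions. Since the D-tree is deterministic, each extension is a forced walk: at every step the required character is dictated by periodicity (it is the character $p$ positions back along the already-known part of the spine, retrievable via $\upFunc$ or a lag pointer), and at most one edge leaving the current node carries that label. Thus I would walk downward from $s$ in $T_\ell$ to reach $s_\ell$ and downward from $r$ in $T_r$ to reach $s_r$, each step in $\Oh(1)$; if the resulting $\dist(r,s_r)<\alpha$ I discard this $s$, exactly as in the construction. The spine is then the path from $s_\ell$ to $s_r$ through $r$, and I would finish the spine-tree by a depth-first search attaching every hanging subtree rooted at a spine node $x$ with $\dist(x,r)\ge\alpha$ (equivalently, every off-spine vertex $u$ with $\dist(s(u),r)\ge\alpha$), precisely the set prescribed in \Cref{fig:decompose}. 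All of this costs time proportional to the number of nodes of the spine-tree produced.

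The main obstacle is not correctness but the running time, because the spine-trees overlap near the root: the rightward walks from $r$ and the near-root portions of several spines revisit shared vertices. The crucial point is that \Cref{lem:sizes} bounds the total size $\sum_i n_i \le 2n$ \emph{with multiplicity}, i.e.\ each vertex is counted once for every spine-tree containing it. Since I build the $i$th spine-tree in $\Oh(n_i)$ time --- every operation performed (a forced step of an extension, a spine edge traversed via $\upFunc$, or a vertex visited by the attaching search) is charged to a distinct vertex-incidence of that spine-tree --- the overall running time is $\Oh(\sum_i n_i)=\Oh(n)$. The only delicate detail is maintaining, during the forced walks, the character that periodicity requires; keeping a pointer lagging $p$ positions behind the current node along the known spine supplies it in $\Oh(1)$ and keeps each step constant-time.
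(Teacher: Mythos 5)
Your construction is essentially the paper's: enumerate the candidates $s$ at depth $\alpha=2\sqrt{n}$ with $\perLen(s)\le\frac12\sqrt{n}$, recover $s_\ell$ and $s_r$ by a forced period-extension walk using determinism (the paper phrases this as following $\childFunc(x,c)$ queries), discard when $\dist(r,s_r)<\alpha$, and attach the far subtrees. The only substantive difference is the time accounting: you charge everything to $\sum_i n_i\le 2n$ via \Cref{lem:sizes}, whereas the paper first bounds the number of candidates by $\sqrt{n}$ and observes that only the $T_r$-edges within distance $\sqrt{n}$ of the root are ever revisited. Your charging scheme leaves one small hole: the work spent on \emph{discarded} candidates is charged to no spine-tree. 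This is fixable --- the $T_\ell$-walks of distinct candidates live in disjoint descendant subtrees of distinct depth-$\alpha$ nodes, and each discarded $T_r$-walk has length $<\alpha$, so with the paper's bound of $\Oh(\sqrt{n})$ on the number of candidates the discarded work totals $\Oh(n)$ --- but as stated your argument does not cover it.
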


\begin{proof}
The spine decomposition of double tree $D=(T_l, T_r, r)$
We start with calculating set $C$ with nodes from $T_l$ at distance $2\sqrt{n}$ from root
with value $\perLen(u) < \frac{1}{2}\sqrt{n}$.
Since $D$ has been preprocessed with Lemma \ref{lem:tools_linear} the set $C$ can be calculated in $\Oh(n)$ time.
We can observe that for any nodes $u_1,u_2\in C$ ($u_1\not=u_2$)
due to high periodicity of $val(u_1, r)$ and $val(u_2, r)$, the paths $(u_1, r)$ and $(u_2, r)$ have
at least $\sqrt{n}$ distinct nodes, so $|C| < \sqrt{n}$.

Next, for each candidate node $u\in C$, we need to verify if it is a part of a spine.
Let $p$ a string period of $val(u, r)$, we locate lowest descendant $S_l$ of $u$ such that string period
of $val(S_l, r)$ is $p$. Such node can be located by traversing subtree of $u$ with $child(x, c)$ queries.
Similarly we traverse $T_r$ starting from root $r$ to locate lowest node in $S_r$ such that string period
of $val(r, S_r)$ is $p^R$.
If $\distFunc(r, S_r) \ge 2\sqrt{n}$ we add to the result spine tree with spine $(S_l, S_r)$
and all subtrees with distance $\ge 2\sqrt{n}$ attached.

In this procedure only edges in $T_r$ with distance $<\sqrt{n}$ can be visited multiple times, but since
$|C|<\sqrt{n}$ the total processing time of such edges is still $\Oh(n)$.
\end{proof}

For efficient processing spine trees, we need one additional lemma:

\begin{lemma}{\rm [ {\bf FFT Application}] }
    \label{lem:fft}
    Given two set of integers $A, B\subset [0,\ldots,n]$
    the set $A\ominus B=\{ a - b : a\in A, b\in B\}$
    can be computed in $\Oh(n\log n)$ time.
\end{lemma}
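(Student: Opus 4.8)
The plan is to reduce the computation of $A\ominus B$ to a single multiplication of two polynomials of degree $\Oh(n)$, which the Fast Fourier Transform carries out in $\Oh(n\log n)$ time. First I would encode each set by its \emph{indicator polynomial}: let $P_A(x)=\sum_{a\in A}x^{a}$ and, to account for the subtraction, let $Q_B(x)=\sum_{b\in B}x^{\,n-b}$. Both are polynomials of degree at most $n$ with coefficients in $\{0,1\}$, and they can be written down in $\Oh(n)$ time directly from the sets.

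Next I would form the product $R(x)=P_A(x)\cdot Q_B(x)=\sum_{a\in A}\sum_{b\in B}x^{(a-b)+n}$. Since $a,b\in[0,n]$, the exponent $(a-b)+n$ is always a nonnegative integer in the range $[0,2n]$, so $R$ is a genuine polynomial of degree at most $2n$. By construction, the coefficient of $x^{c}$ in $R$ equals the number of pairs $(a,b)\in A\times B$ with $a-b=c-n$; in particular it is nonzero exactly when $c-n\in A\ominus B$. Hence, after computing $R$, I would scan its coefficients and output $\{\,c-n : [x^{c}]\,R(x)\neq 0\,\}$, which is precisely $A\ominus B$.

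Since $\deg P_A,\deg Q_B\le n$, the product $R$ is computed via FFT in $\Oh(n\log n)$ time, while both the construction of the input polynomials and the final scan take $\Oh(n)$ time, giving the claimed bound. There is no genuine obstacle here beyond bookkeeping: the only point requiring care is the shift by $n$ built into $Q_B$, whose purpose is to turn the possibly negative differences $a-b$ into valid nonnegative polynomial exponents, and which must be undone when reading off the result.
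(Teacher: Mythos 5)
Your proposal is correct and follows essentially the same route as the paper: both encode the sets as indicator polynomials, reflect one of them so that the difference $a-b$ appears as a shifted nonnegative exponent, and read off $A\ominus B$ from the nonzero coefficients of the FFT-computed product. The only cosmetic difference is the placement of the shift (you shift by $n$ inside $Q_B$ alone, while the paper shifts $A$ up by $n$ and $B$ by $n-i$, yielding an offset of $2n$), which changes nothing of substance.
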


\begin{proof}
    We define two polynomials
    $$f_A(x) = \sum_{i\in A} x^{n + i},\ \ \
    f_B(x) = \sum_{i\in B} x^{n - i}$$
    Using FFT we can multiply two polynomials
    with integer coefficients in time $\Oh(n\log n)$ (\cite{DBLP:books/daglib/0023376}).
    And clearly polynomial
    $f(x)=f_A(x) \cdot f_B(x)$ has non-zero $i$-th coefficient
    iff $i-2n\in A\ominus B$.
\end{proof}

\begin{lemma}
    For a spine tree $S=(T_l, T_r, r)$ with $n$ nodes
    it is possible to calculate in time $\Oh(n^{1.5}\log n)$,
    the set of even palindromes $P$ in $S$ such that:
    \begin{itemize}
        \item $|P|=\Oh(n^{1.5})$,
        \item $P$ contains all even palindromes in $S$
            with left endpoint and middle point in $T_l$ and right endpoint in $T_r$.
    \end{itemize}
    \label{lem:alg-pal-in-spine-trees}
\end{lemma}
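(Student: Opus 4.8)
The plan is to mirror the combinatorial argument behind \cref{lem:spine-trees}: group the nodes by the \emph{label} $L(\cdot)$ introduced there, and exploit the fact that a palindrome with endpoints in one group is completely determined by its length. After preprocessing $S$ with \cref{lem:tools_linear,lem:tools_nlogn}, I would compute for every node $w$ lying in an attached subtree (so that $\dist(s(w),r)\ge\alpha>p$) its label together with its spine offset: $a(w)=\dist(s(w),r)$ if $w\in T_\ell$, or $b(w)=\dist(r,s(w))$ if $w\in T_r$. Since subtrees are attached only at distance at least $\alpha\ge\tfrac12\sqrt n\ge p$ from $r$, each label is read along a single monotone path, so $L(w)$ comes from one $\labelFunc$ query and all nodes can be bucketed by their (reversal-aware) label codes in $\Oh(n\log n)$ time. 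For a bucket $V_L$, let $A_L$ and $B_L$ be the offset sets of its $T_\ell$- and $T_r$-nodes.

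The first key step is to show that a cross pair $u\in T_\ell$, $v\in T_r$ with $L(u)=L(v)$ forms a palindrome of length $2(|L|-p)+a(u)+b(v)$, which is even exactly when $a(u)+b(v)$ is even. Writing $\val(u,v)=X\,Y\,X^R$ with $X=\val(u,s(u))$ and $Y=\val(s(u),s(v))$ a fragment of the spine, the equality $L(u)=L(v)$ forces the length-$p$ prefix of $Y$ to equal the reverse of its length-$p$ suffix; as $Y$ has period $p$ and $|Y|=a(u)+b(v)\ge 2\alpha\ge p$, this makes $Y$ a palindrome, hence so is $\val(u,v)$. Conversely, comparing the length-$|L|$ prefix and suffix of \emph{any} cross palindrome shows $L(u)=L(v)$, so no palindrome is missed; in particular every target palindrome (even, with left endpoint and middle in $T_\ell$ and right endpoint in $T_r$) lands in some bucket. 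Because the length determines the string inside a bucket, enumerating the distinct palindromes of $V_L$ reduces to enumerating the distinct values of the sumset $A_L+B_L$ of the correct parity.

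Having reduced to sumsets, I would split the buckets as in \cref{lem:spine-trees}. For \emph{small} buckets with $|V_L|\le\sqrt n$ I enumerate all pairs $(a,b)\in A_L\times B_L$ directly, emitting for each even sum a triple built from a witnessing pair and its center $\centerFunc(u,v)$; since $|V_L|^2\le|V_L|\sqrt n$, the total cost over small buckets is $\sum_L|V_L|\sqrt n\le n^{1.5}$. For \emph{large} buckets with $|V_L|>\sqrt n$ there are fewer than $\sqrt n$ of them, so I can afford one FFT each: adapting \cref{lem:fft} to compute $A_L+B_L$ over the range $[0,n]$ gives all achievable sums in $\Oh(n\log n)$ time, hence $\Oh(n^{1.5}\log n)$ in total. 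Each bucket produces at most $\min(|V_L|^2,n)$ distinct palindromes, so $|P|\le\sum_L\min(|V_L|^2,n)\le n^{1.5}$; the parity filter, the inclusion of both center orientations (the pairs with $a<b$ merely contribute genuine palindromes centered in $T_r$), and the $\Oh(\sqrt n)$ short palindromes whose endpoint touches the spine within distance $p$ of $r$ change these bounds only by constant factors and are swept up by brute force.

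The hard part will be the bookkeeping for large buckets: an FFT reports \emph{which} lengths occur but not a witnessing node pair, whereas the required output is a triple $(\ell,u,v)$. I would recover one representative per achievable sum within the same $\Oh(n\log n)$ budget by standard witness reconstruction for convolutions (a logarithmic number of subsampled transforms, or a shared traversal of the spine tree via $\upFunc$ and $\childFunc$ that realizes each new length only once), after which the offsets $a,b$ fix $u,v$ and the center follows from $\centerFunc$. The other delicate point, and the one where \cref{lem:periodicperiod} does the real work, is establishing the exact label--palindrome correspondence of the second paragraph, so that the grouping is simultaneously complete (every target palindrome is captured) and sound (every emitted triple is a genuine palindrome).
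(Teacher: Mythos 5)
Your proposal follows essentially the same route as the paper: group nodes by the labels $L(\cdot)$ from \cref{lem:spine-trees}, handle buckets of size at most $\sqrt n$ by brute-force pair enumeration with $\isPalindrome$ tests, and handle large buckets by an FFT on the spine offsets (your sumset $A_L+B_L$ is exactly the paper's $Y_L\ominus X_L$), keeping only the even values. The one place you diverge is the step you yourself flag as ``the hard part'': recovering witnesses for the convolution. The paper does not do any witness reconstruction for the FFT. It uses the observation you already state --- that within a bucket the palindrome is determined by its length --- to output each reported length $\delta$ directly as the triple $\bigl(\ell,\,x_0,\,\upFunc(s(x_0),\delta/2)\bigr)$, where $x_0$ is the deepest node of $V_L\cap T_\ell$; since the required output format is a length plus \emph{any} path spelling the first half (not an actual occurrence), and any middle point in $T_\ell$ satisfies $\delta/2\le\dist(s(x_0),r)$, this single representative suffices and costs $\Oh(1)$ per length. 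So the subsampled-transform machinery you invoke is unnecessary (and your remark that ``the offsets $a,b$ fix $u,v$'' is not quite right, since the FFT only yields the sum $a+b$); with the paper's shortcut the rest of your argument, including the label--palindrome equivalence via the period-$p$ spine fragment and the $\sum_L\min(|V_L|^2,n)\le n^{1.5}$ bound on $|P|$, goes through as written.
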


\begin{proof}
    First, let us remind that the complexity of the algorithm is very close to the actual limit on
    the number of distinct palindromes in spine tree, since there could be up to $\Oh(n^{1.5)}$ palindromes in $S$
    (see~Lemma \ref{lem:spine-trees}).
    Our approach is very similar to the one used in the proof of~Lemma \ref{lem:spine-trees}.
    We identify the labels $L(u)$ for each $u\in S$ with a distance at least $2\sqrt{n}$ from the root.
    Since the tree is already preprocessed with~Lemma \ref{lem:tools_linear} and~Lemma \ref{lem:tools_nlogn}
    such labels can be retrieved and represented in constant time and space.

    Next all labels $L(u)$ are sorted in $\Oh(n\log n)$ time and we group all nodes with the same label
    into groups $V_{L_1}$, $V_{L_2}, \ldots, V_{L_k}$.

    For each group $V_L$ with at most $\sqrt{n}$ nodes can be inspected in $\Oh(|V_L|^2)$ time,
    for each $x \in V_L\cap T_l$ and $y\in V_L\cap T_r$ we check in $\Oh(1)$ the condition $\isPalindrome(x, y)$
    and report the palindrome if the condition is true.

    For each group $V_L$ with more than $\sqrt{n}$ nodes we will use discrete convolutions
    to speed up the calculations.
    First we need to verify if the spine part of the palindromes from $V_L$ is in fact palindromic.
    This can be checked by locating any pair of nodes $x \in V_L\cap T_l, y\in V_L\cap T_r$ such that
    $\distFunc(x, y)$ is even. If the condition $\isPalindrome(x, y)$ is true, then for any pair of nodes
    $x \in V_L\cap T_l, y\in V_L$ with even distance we obtain palindrome, all we need to do is
    to identify all possible (even) values of $\distFunc(s(x), s(y))$.

    Let $s_\ell$ be the left endpoint of the spine of $S$
    and $$X_L=\{\distFunc(s_\ell, s(x))\ \mbox{for}\ x\in V_L\cap T_l\}$$
    $$Y_L=\{\distFunc(s_\ell, r) + \distFunc(s(y), r)\ \mbox{for}\ y\in V_L\cap T_r\}.$$
    The set of all possible differences $\Delta_L$ can be obtained by computing $Y_L\ominus X_L$ and
    taking only even values.
    This step takes $\Oh(n\log n)$ due to Lemma \ref{lem:fft}.

    Unfortunately using this step we don't have a witnesses for values $\delta \in \Delta$.
    Nevertheless we are able to reconstruct the palindromic substrings itself.
    Let $x_0$ is the node from $V_L\cap T_l$ that is the farthest from the root $r$.
    For each even value $\delta\in \Delta$ with $\frac{\delta}{2} \le \distFunc(s(x_0), r)$
    we report palindrome with value $w w^R$ where $w=val(x_0, up(s(x), \frac{\delta}{2}))$.
    Please not that this palindrome might not occur in the node $x_0$,
    also we might over-report here and report also the palindromes that have a middle point in $T_r$.
\end{proof}

\subsection{Algorithm for general D-trees}

In our algorithm we will use similar approach to the one used in the proof of~Lemma \ref{lem:decompose}.
There are a few technical issues that we need to overcome.
First we need to strengthen a notion of D-trees,
we need to make sure that all paths in D-tree
correspond to simple paths in the original tree.
Unfortunately due to repeated edges adjacent to the root this rule can be violated
(see~Figure~\ref{fig:alg_double_tree_issue}).
Second problem is efficient calculation of palindromes in spine trees.

\begin{figure}[ht]
    \vspace{0.1cm}
	\begin{center}
		\begin{tikzpicture}[
    scale=0.85,
    dot/.style={draw,circle,minimum size=2mm,inner sep=0pt,outer sep=0pt,fill=black},
    leftArr/.style={latex-,thick},
    rightArr/.style={-latex,thick}
  ]
  
  \begin{scope}
  \node (r) [circle, draw, fill=gray!50!white] at (0,0) {$r$};
  \node (a) [dot] at (-1,1) {};
  \node (a1) [dot] at (-2,1.5) {};
  \node (a2) [dot] at (-2,0.5) {};
  \node (a3) [dot] at (-3,1.5) {};
  \node (a4) [dot] at (-4,1.5) {};
  \node (b) [dot] at (-1,-1) {};
  
  \draw (r)--(a) node[midway,above] {$a$};
  \draw (r)--(b) node[midway,above] {$b$};
  \draw (a)--(a1) node[midway,above] {$a$};
  \draw (a)--(a2) node[midway,below] {$b$};
  \draw (a1)--(a3) node[midway,above] {$a$};
  \draw (a3)--(a4) node[midway,above] {$b$};
  
  \draw[dotted] (1,2)--(1,-2);
  \end{scope}
  
  \begin{scope}[xshift=6cm]
  \node (lr) [circle, draw, fill=gray!50!white] at (0,0) {$r$};
  \node (la) [dot] at (-1,1) {};
  \node (la1) [dot] at (-2,1.5) {};
  \node (la2) [dot] at (-2,0.5) {};
  \node (la3) [dot] at (-3,1.5) {};
  \node (la4) [dot] at (-4,1.5) {};
  \node (lb) [dot] at (-1,-1) {};
  \node (ra) [dot] at (1,1) {};
  \node (ra1) [dot] at (2,1.5) {};
  \node (ra2) [dot] at (2,0.5) {};
  \node (ra3) [dot] at (3,1.5) {};
  \node (ra4) [dot] at (4,1.5) {};
  \node (rb) [dot] at (1,-1) {};
  
  \draw[leftArr] (lr)--(la) node[midway,above] {$a$};
  \draw[leftArr] (lr)--(lb) node[midway,above] {$b$};
  \draw[leftArr] (la)--(la1) node[midway,above] {$a$};
  \draw[leftArr] (la)--(la2) node[midway,below] {$b$};
  \draw[leftArr] (la1)--(la3) node[midway,above] {$a$};
  \draw[leftArr] (la3)--(la4) node[midway,above] {$b$};
  \draw[rightArr] (lr)--(ra) node[midway,above] {$a$};
  \draw[rightArr] (lr)--(rb) node[midway,above] {$b$};
  \draw[rightArr] (ra)--(ra1) node[midway,above] {$a$};
  \draw[rightArr] (ra)--(ra2) node[midway,below] {$b$};
  \draw[rightArr] (ra1)--(ra3) node[midway,above] {$a$};
  \draw[rightArr] (ra3)--(ra4) node[midway,above] {$b$};
  \end{scope}

  \end{tikzpicture}
	\end{center}
    \vspace{0.1cm}
	\caption{Undirected tree $T$ and its D-tree $D=(T_l,r,T_r)$. Note that $D$ contains path
	with even palindrome $baaaab$ that is not present in the original tree $T$.}
	\label{fig:alg_double_tree_issue}
\end{figure}
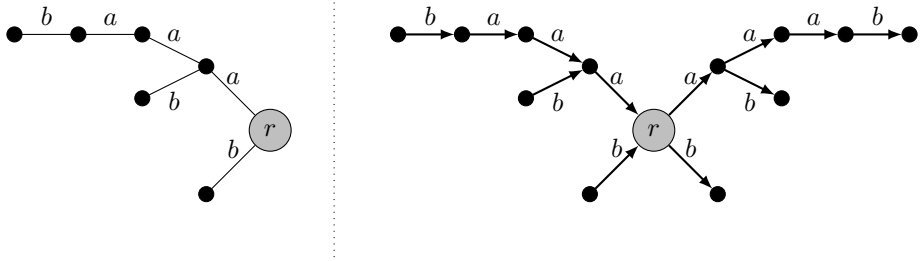

\noindent
We resolve those issues with 2D-trees using following lemma:

\begin{lemma}
	\label{lem:alg-decomposition-of-tree-to-double-trees}
    Given undirected tree $T$ with $n$ nodes, we can calculate decomposition
    of $T$ into family of D-trees $\mathcal{D}$ such that:
	\begin{itemize}
		\item each simple path in $D\in\mathcal{D}$ corresponds to a simple path in $T$,
		\item for each even path $p\in T$, there exists $D\in\mathcal{D}$ such that
			there exists even path $p'\in D$ such that $val(p)=val(p')$ and middle point of $p'$ is in the left subtree of $D$,
		\item total number of edges in all double trees from $\mathcal{D}$ is $\Oh(n\log n)$.
	\end{itemize}
	The decomposition $\mathcal{D}$ can be calculated in time $\Oh(n\log n)$.
\end{lemma}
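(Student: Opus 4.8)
The plan is to reuse the centroid idea of \Cref{centroids} and the $\D(T,r)$ gadget of \Cref{Psi}, but to repair the gadget so that it never creates spurious paths. As \Cref{fig:alg_double_tree_issue} illustrates, the determinize-and-duplicate construction can produce a path that climbs one branch of $T_\ell$ and descends the mirror branch of $T_r$, reusing the original edge at the root; such a path is not simple in $T$. The first thing I would pin down is the exact cause: a path $\val(u,v)$ with $u\in T_\ell$ and $v\in T_r$ is simple in $T$ if and only if its two endpoints reach the root through \emph{disjoint} parts of the original tree. Hence the whole task reduces to building D-trees whose left and right sides come from genuinely disjoint regions of $T$, and then condition~(1) becomes automatic.

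For the construction I would first \emph{binarize} $T$: every node of degree more than three is expanded into a small tree of auxiliary nodes joined by edges carrying an empty label (equivalently, of length zero), so that the node set grows only by $\Oh(n)$, every node now has degree at most three, and $\val$ is unchanged because the auxiliary edges contribute no characters. On this bounded-degree tree I would run a recursive \emph{balanced edge separator}: pick an edge $e=(x,y)$ whose removal splits the current subtree into two parts of size at most a constant fraction of the whole, emit a D-tree rooted at $x$ whose left side is the determinization of the $x$-part and whose right side is the determinization of the $y$-part joined through $e$ (together with the mirror D-tree rooted at $y$), and then recurse into the two parts. Determinizing each side independently keeps every emitted gadget a genuine D-tree while respecting the separation.

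Soundness (condition~(1)) is then immediate: the two sides of any emitted D-tree are edge-disjoint halves of $T$, so a path running from the left side up to the root and back down the right side crosses the bridge $e$ exactly once and is therefore simple in $T$. For completeness (condition~(2)) I would take an even palindrome $p$ of $T$ --- the reduction to even palindromes is \Cref{all-to-even} --- and look at the first separator in the recursion that places the two endpoints of $p$ on opposite sides; before that separator both endpoints lay on one side, so the whole of $p$ is contained in the subtree being processed at that step. Choosing the orientation (rooted at $x$ or at $y$) whose left side contains the centre node of $p$ yields a D-tree in which $p$ reappears as an even path $p'$ with $\val(p')=\val(p)$ and middle point in the left subtree, exactly as required.

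Finally, for the size and time bound I would argue that bounded degree guarantees a balanced edge separator, so the recursion has depth $\Oh(\log n)$ while the parts at each depth are disjoint and of total size $\Oh(n)$; since each separator contributes a pair of D-trees linear in the size of the part it splits, the total number of edges is $\Oh(n\log n)$, and the whole decomposition is computed in $\Oh(n\log n)$ time using the linear-time toolbox of \Cref{lem:tools_linear} for distances, level ancestors, and periods. The hard part will be the bookkeeping around binarization: I must make sure the empty-label auxiliary edges neither alter any $\val$ nor create palindromes of their own, that determinization is applied strictly within a single side so that no two nodes from opposite sides are ever merged (which would silently resurrect a spurious path), and that the separator can actually be located and the two determinized sides assembled within the linear per-level budget.
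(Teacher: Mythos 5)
Your construction is sound and rests on the same key insight as the paper's: the spurious paths of \cref{fig:alg_double_tree_issue} disappear once the left and right sides of each emitted D-tree are determinized \emph{separately} from two disjoint pieces of $T$, and emitting both orientations of every split takes care of the ``middle point in the left subtree'' requirement. Where you differ is in how the balanced disjoint split is obtained. The paper keeps the centroid node $r$ and simply partitions the subtrees hanging off $r$ into two groups $T_1,T_2$ with $\max(|T_1|,|T_2|)\le\frac{3}{4}|T|$ (possible because every subtree of a centroid has size at most $\frac{1}{2}|T|$), emits $(T'_1,T'_2,r)$ and $(T'_2,T'_1,r)$, and recurses into $T_1$ and $T_2$; this never leaves the one-character-per-edge model. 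You instead binarize $T$ with empty-labelled auxiliary edges so that a balanced \emph{edge} separator exists. That works in principle but buys nothing and costs real bookkeeping: the separator edge may itself be an auxiliary zero-length edge, in which case the root of your D-tree is a fictitious node and the first edge of its right side carries no character; determinization wants to glue the identically (empty-)labelled auxiliary siblings back together; and the downstream machinery (\cref{lem:tools_linear}, \cref{lem:tools_nlogn}, the spine decomposition) identifies $\dist(u,v)$ with the number of edges on a path, which zero-length edges break. All of this is repairable --- e.g.\ by contracting the auxiliary edges when emitting each D-tree, which effectively turns your auxiliary separator back into the paper's grouping of the centroid's subtrees --- but the centroid-plus-grouping route reaches the same $\Oh(n\log n)$ size and time bounds with none of that overhead. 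Your completeness argument (take the first separating level of the recursion and choose the orientation whose left side contains the centre of $p$) and the $\Oh(\log n)$-depth size analysis match the paper's.
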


\begin{proof}
    The decomposition $\mathcal{D}$ can be created in recursive manner.

    \noindent
    For a tree $T$, we use following procedure:
	\begin{itemize}
		\item identify centroid node $r$,
        \item divide subtrees adjacent to $r$ into two trees $T_1$, $T_2$,
            such that $\max(|T_1|,|T_2|) \le \frac{3}{4}|T|$,
		\item create deterministic version of trees $T_1$ and $T_2$ $T'_1$ and $T'_2$,
		\item to handle paths that have one endpoint in $T_1$ and other in $T_2$ we add
			to $\mathcal{D}$ D-trees $(T'_1, T'_2, r)$ and $(T'_2, T'_1, r)$,
		\item to handle paths that are contained in $T_1$ or $T_2$
			recursively process decomposition of $T_1$ and $T_2$.
    \end{itemize}

    \noindent
    The total size of the created decomposition is $\Oh(n\log n)$.
\end{proof}

Now we are ready to outline the algorithm.
For given tree $T$ we decompose it into family or D-trees $\mathcal{D}$.

\noindent
The family of trees $\mathcal{D}$ is preprocessed using Lemma \ref{lem:tools_linear} and Lemma \ref{lem:tools_nlogn}.
We need to process all trees from $\mathcal{D}$ altogether to obtain consistent DFS identifiers
between different trees.

\noindent
Each double tree $D\in\mathcal{D}$ is processed separately,
we find the middle palindromes using spine decomposition
and all other palindromes using exhaustive search.

\noindent
Finally we remove possible duplicates in reported palindromes using sorting.

\vspace*{0.7cm}
\begin{algorithm}[H]
    \DontPrintSemicolon
    \caption{{\sc FindPalindromesInDoubleTree}$(D=(T_l, T_r, r))$}
    preprocess the tree $D$ for queries from lemmas~\ref{lem:tools_linear} and \ref{lem:tools_nlogn} \;
    $P=\emptyset$ \;
    // handle middle palindromes \;
    decompose $D$ into spine trees $S_1,\ldots,S_k$ \;
    \ForEach{$S_i\in \mbox{\sc SpineDecomposition}(D)$}{
        add to $P$ palindromes reported by Lemma \ref{lem:alg-pal-in-spine-trees} in $S_i$ \;
    }
    // handle first (shorter) and last (longer) palindromes \;
    \ForEach{$u\in T_l$}{
        let $L_u$ is a data structure that allows access to the list of ancestors $u'$ of $u$ such that $val(u', r)$ is a palindrome \;
        elements of $L_u$ are sorted by their depth in a tree \;
        \ForEach{$u'\in (\mbox{first}\ 4\sqrt{n}\ \mbox{nodes from}\ L_u) \cup (\mbox{last}\ 2\sqrt{n}\ \mbox{nodes from}\ L_u)$}{
            if there exists node $v\in T_r$ such that $val(v, r)=val(u, u')$ and $val(u, v)$ is a palindrome
            add it to $P$
        }
    }
    \KwRet{$P$}
    \label{alg:FindPalindromesInDoubleTree}
\end{algorithm}
\vspace{0.2cm}

\begin{lemma}
    \label{lem:alg-FindPalindromesInDoubleTree}
    For a D-tree $D$ with $n$ nodes the problem
    $\ReportAllEvenPalindromes$ can be solved in
    in $\Oh(n^{1.5}\log n)$ time.
\end{lemma}

\begin{proof}
    The pseudocode of the solution is given in Algorithm~\ref{alg:FindPalindromesInDoubleTree}.
    The correctness of the algorithm is proved by the Lemma \ref{lem:decompose}.
    We need to prove that the algorithm can be implemented in $\Oh(n^{1.5}\log n)$ time.
    Calculating the spine decomposition requires $\Oh(n)$ time due to Lemma \ref{lem:spine-decomposition},
    and each spine $S$ can be processed in $\Oh(|S|^{1.5}\log |S|)$ time due to Lemma \ref{lem:alg-pal-in-spine-trees}.
    Since total size of the spine trees is $\Oh(n)$, this part takes $\Oh(n^{1.5}\log n)$.

    For handling the first and last palindromes, we need a data structure to operate on lists $L_u$.
    We identify all nodes $X=\{x \in D : val(x, r)\ \mbox{is a palindrome}\}$ using
    Lemma \ref{lem:tools_nlogn}.
    Then we create a subtree $D_X$ which contains only nodes from $X$ and preprocess it for
    Level Ancestor queries (\cite{DBLP:journals/tcs/BenderF04}).
    Additionally for each $u\in D$ we store its nearest ancestor in $D_X$ and its depth in $D_x$ (equal to $L_u$).
    With this approach any element of $L_u$ can be retrieved in $\Oh(1)$ time using LA queries on $D_x$.

    \noindent
    For each element $u'\in L_u$ we test the existence of node $w$ in $\Oh(1)$ time using function $\existsFunc$.
\end{proof}

\begin{theorem}\mbox{ \ }\\
    \label{thm:alg-report-all-even-palindromes}
    For a a tree $T$ with $n$ nodes the problem $\ReportAllPalindromes$ can be solved in
    $\Oh(n^{1.5}\log n)$ time.
\end{theorem}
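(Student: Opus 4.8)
The plan is to chain together the three reduction and algorithmic lemmas already established and then argue that the cost of the per-D-tree work telescopes to the claimed bound rather than incurring an extra logarithm.

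First I would invoke \Cref{all-to-even} to replace $\ReportAllPalindromes$ by $\ReportAllEvenPalindromes$: padding each edge of $T$ labelled $c$ into the length-$4$ path $\$,c,c,\$$ produces a tree $T'$ with $\Oh(n)$ nodes, and it suffices to report the even palindromes of $T'$ whose length is a multiple of $4$ and which begin with $\$$, then strip the padding. So the entire task reduces to solving $\ReportAllEvenPalindromes$ on a tree of size $\Oh(n)$. Next I would apply \Cref{lem:alg-decomposition-of-tree-to-double-trees} to decompose $T'$ into a family $\mathcal{D}$ of D-trees in $\Oh(n\log n)$ time with $\sum_{D_i\in\mathcal{D}}|D_i|=\Oh(n\log n)$. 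By the second guarantee of that lemma, every even path of $T'$ is realized (with its middle in the left subtree) in some $D_i$, so running $\ReportAllEvenPalindromes$ on each $D_i$ and taking the union recovers all even palindromes of $T'$. I would preprocess the whole family jointly with \Cref{lem:tools_linear,lem:tools_nlogn} — costing $\Oh(n\log^2 n)$, which is dominated by the final bound — so that node identifiers and label codes are consistent across D-trees, and then call \Cref{lem:alg-FindPalindromesInDoubleTree} on each $D_i$ at cost $\Oh(|D_i|^{1.5}\log|D_i|)$.

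The crux — and the step I expect to be the main obstacle — is that the naive estimate $\sum_i|D_i|^{1.5}\le\sqrt{\max_i|D_i|}\cdot\sum_i|D_i|=\Oh(\sqrt{n}\cdot n\log n)$ only yields $\Oh(n^{1.5}\log^2 n)$ after multiplying by the per-tree logarithmic factor, which is off by a logarithm. To sharpen it I would exploit the recursive centroid structure of the decomposition: the D-trees created at recursion depth $\ell$ arise from pairwise-disjoint subtrees whose sizes sum to $\Oh(n)$, while each such subtree — and hence each D-tree at that depth — has size at most $(3/4)^\ell n$ by the balancing condition $\max(|T_1|,|T_2|)\le\frac34|T|$. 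Consequently
\[
\sum_{D_i\text{ at depth }\ell}|D_i|^{1.5}\le\bigl((\tfrac34)^\ell n\bigr)^{1/2}\!\!\sum_{D_i\text{ at depth }\ell}\!\!|D_i|=\Oh\!\bigl((\tfrac{\sqrt3}{2})^\ell\,n^{1.5}\bigr),
\]
and summing this convergent geometric series over all $\Oh(\log n)$ depths gives $\sum_i|D_i|^{1.5}=\Oh(n^{1.5})$. Hence the total work across all D-trees is $\Oh(\log n\cdot\sum_i|D_i|^{1.5})=\Oh(n^{1.5}\log n)$.

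Finally I would account for assembling the output. Each $D_i$ emits $\Oh(|D_i|^{1.5})$ palindromes (possibly with over-reporting within a D-tree and duplicates across D-trees), so the whole list has $\Oh(\sum_i|D_i|^{1.5})=\Oh(n^{1.5})$ entries; representing each entry by its length together with the $\labelFunc$ code of its first half (from \Cref{lem:tools_nlogn}) makes equality testable in $\Oh(1)$, so sorting to discard duplicates and to apply the $\$$-padding filter of \Cref{all-to-even} costs $\Oh(n^{1.5}\log n)$. Adding the decomposition, preprocessing, per-tree, and deduplication costs all stay within $\Oh(n^{1.5}\log n)$, which establishes the theorem. (As a consistency check, the size of the final answer is $\Oh(n^{1.5})$, in agreement with \Cref{thm:main}.)
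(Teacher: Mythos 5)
Your proposal is correct and takes essentially the same route as the paper: reduce to even palindromes via the \$-padding, centroid-decompose into D-trees, preprocess the family jointly, run the per-D-tree algorithm, and deduplicate by sorting. Your explicit level-by-level geometric-series bound on $\sum_i |D_i|^{1.5}$ is exactly how one solves the recurrence $T(n)=T(\alpha n)+T((1-\alpha)n)+\Oh(n^{1.5}\log n)$, $\tfrac14\le\alpha\le\tfrac34$, by which the paper expresses the same cost analysis.
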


\begin{proof}
Due to Lemma~\ref{all-to-even} the problem reduces to counting even palindromes.
Hence we later consider only even palindromes.

    First we decompose the tree $T$ into set of 
    D-trees $\mathcal{D} = D_1,\ldots,D_k$ using Lemma \ref{lem:alg-decomposition-of-tree-to-double-trees}.
    All trees are preprocessed using tools from Lemma \ref{lem:tools_linear} and Lemma \ref{lem:tools_nlogn}.

    Next we calculate palindromes in all D-trees $D_i\in\mathcal{D}$
    using Lemma \ref{alg:FindPalindromesInDoubleTree}.
    Due to construction of $\mathcal{D}$ this
    requires time $$T(n) = T(\alpha n) + T((1-\alpha) n) + \Oh(n^{1.5}\log n)$$
    (for $\frac{1}{4} \le \alpha \le \frac{3}{4}$),
    which is $T(n)=\Oh(n^{1.5}\log n)$ and the total size
    of returned palindromes $\mathcal{P}$ is $$P(n) = P(\alpha n) + P((1-\alpha)n) + \Oh(n^{1.5})$$
    (for $\frac{1}{4} \le \alpha \le \frac{3}{4}$)
    which is $P(n)=\Oh(n^{1.5})$.

    Finally we remove from $\mathcal{P}$ duplicates.
    Since all palindromes are identified by length, and pair
    of integers generated by $\labelFunc$, we can sort $\mathcal{P}$ in $\Oh(|P|)$ time
    which is $\Oh(n^{1.5})$.
\end{proof}

\section{Algorithm for finding longest palindrome in tree}
\label{sec:alg_t}

In this section, we consider the following problems for palindromes in trees:

\begin{problem}[\PalindromeTesting]
    Given tree a $T$ with $n$ edges, each labelled by single character from the alphabet $\Sigma$
    and integer $k>0$,
	decide whatever $T$ contains palindrome of length exactly $k$.
\end{problem}

\begin{problem}[\FindLongestPalindrome]
    Given tree a $T$ with $n$ edges, each labelled by single character from the alphabet $\Sigma$
	find the length of the longest palindrome in $T$.
\end{problem}

\begin{theorem}\mbox{ \ }\\
    {\bf (a)} Problem $\PalindromeTesting$ can be solved in $O(n\log^2 n)$ time.\\
    {\bf (b)}
    Problem $\FindLongestPalindrome$ can be solved in $O(n\log^3 n)$ time.
\end{theorem}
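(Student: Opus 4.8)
The plan is to build both algorithms on the same toolbox and decomposition used for \ReportAllPalindromes. First I would apply the doubling reduction of \cref{all-to-even}, so that only even palindromes need to be handled, and then invoke \cref{lem:alg-decomposition-of-tree-to-double-trees} to replace $T$ by a family $\mathcal{D}$ of D-trees of total size $\Oh(n\log n)$ in which every even path of $T$ is realized as an even path through the root $r$ of some $D\in\mathcal{D}$. All D-trees are then preprocessed jointly with \cref{lem:tools_linear} and \cref{lem:tools_nlogn}; this one-time preprocessing costs $\Oh(n\log^2 n)$ and, as we shall see, dominates part~(a).

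For part~(a), the key observation is that once the length $k=2h$ is fixed, a palindrome $\val(u,v)$ of that length through $r$ is \emph{pinned} by its far endpoint. Using both orientations of each D-tree we may assume $\dist(u,r)\ge h$, so the centre node $c=\upFunc(u,h)$, the ``upward'' part $\val(c,r)$ lying inside $T_\ell$, and the ``downward'' remainder to be found in $T_r$ are all determined by $u$ and $h$. The upward part is a symmetry condition internal to $T_\ell$ (equivalently, $\val(u,r)$ must have a palindromic suffix of the right length), checkable in $\Oh(1)$ with $\isEqual$ and $\isPalindrome$, while the existence of a matching $v\in T_r$ is exactly one $\existsFunc$ query, also $\Oh(1)$. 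Hence each endpoint $u$ contributes $\Oh(1)$ work, the test over all D-trees costs $\Oh(n\log n)$, and together with preprocessing we obtain the $\Oh(n\log^2 n)$ bound.

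For part~(b), I would binary search on the length using the \emph{monotone} predicate ``$T$ contains a palindrome of length at least $k$''. The outer search contributes an $\Oh(\log n)$ factor, so it suffices to evaluate the predicate in $\Oh(n\log^2 n)$ time. This is more delicate than part~(a) because the centre is no longer pinned: for a fixed endpoint $u$ we must decide whether \emph{some} admissible half-length $h\ge\lceil k/2\rceil$ works. Here I would exploit periodicity: the admissible values of $h$ correspond to the palindromic suffixes of $\val(u,r)$, whose lengths are known to form $\Oh(\log n)$ arithmetic progressions (a classical consequence of the Periodicity Lemma, \cref{lem:periodicity}, readable from the precomputed period data $\perLen$). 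Within one such progression the required right halves are prefixes of a fixed word taken at periodically spaced lengths, hence a periodic family that can be probed against the $\existsFunc$ dictionary with $\Oh(1)$ structured queries. This yields $\Oh(\log n)$ work per endpoint, $\Oh(n\log^2 n)$ per predicate evaluation, and $\Oh(n\log^3 n)$ overall.

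The hard part will be exactly this last step: compressing the per-endpoint search over half-lengths into $\Oh(\log n)$ work. The exact-length test escapes with a single $\existsFunc$ lookup precisely because the centre is unique; for the ``at least $k$'' predicate one must effectively range over many centres at once, and the crux is to prove that the valid configurations split into $\Oh(\log n)$ periodic families (via the palindromic-suffix/border structure of $\val(u,r)$) and that each family can be matched against the right subtree $T_r$ by a constant number of dictionary or arithmetic-progression queries, rather than by inspecting each candidate length individually.
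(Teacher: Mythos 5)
Your part~(a) is essentially the paper's algorithm: after the reduction to even palindromes and the decomposition into D-trees, each candidate endpoint $u$ pins the centre once the length $k$ is fixed, so one $\isPalindrome$ check (for the symmetric part inside $T_\ell$) plus one $\existsFunc$ query (for the matching half in $T_r$) suffices per node, and the $\Oh(n\log^2 n)$ preprocessing dominates. That part is fine.

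For part~(b) you have taken a genuinely harder route than necessary, and the step you yourself flag as ``the hard part'' is a real gap: you binary search on the predicate ``$T$ contains a palindrome of length \emph{at least} $k$,'' which un-pins the centre and forces you to aggregate, for each endpoint $u$, over all admissible half-lengths. Your sketch --- that the palindromic suffixes of $\val(u,r)$ fall into $\Oh(\log n)$ arithmetic progressions and that each progression can be matched against $T_r$ with $\Oh(1)$ queries --- is plausible for the first half but unsupported for the second: $\existsFunc$ is a static dictionary of single DBF codes, and testing whether \emph{some} member of a periodic family of words of length $\ge$ a threshold occurs as a root-to-node label in $T_r$ is not a constant number of such lookups without building additional structure. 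The missing observation that makes all of this unnecessary is that the \emph{exact-length} predicate is already monotone within a parity class: trimming one edge from each end of an occurrence of a palindrome of length $\ell$ yields an occurrence of a palindrome of length $\ell-2$, so ``$T$ contains a palindrome of length exactly $k$'' holds for all $k$ of the right parity below the maximum. The paper therefore simply binary searches (per parity) using the exact-length tester of part~(a) as a black box, paying one extra $\log n$ factor and landing at $\Oh(n\log^3 n)$ with no new machinery. You should either supply the data structure that answers your periodic-family queries, or replace your predicate with the parity-restricted exact-length one and reuse part~(a) directly.
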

\begin{proof}
The point {\bf (a)} can be solved using the algorithm below.
The point {\bf (b)} can be solved using the testing function
together with
binary search.
\end{proof}

\begin{algorithm}[H]
    \caption{Test if there exists palindrome of length $k$ in DD-tree $D$}
    \ForEach{$u\in nodes(T)$}{
        \eIf{$depth(u)\ge k$}{
            let $v$ is a node $up(u, k)$ \;
            \If{$val(u, v)$ is a palindrome}{
                \KwRet{true}
            }
        }{
            let $v'$ is a node $up(u, depth(u)-k)$ \;
            \If{$\isPalindrome(v',r)$ {\bf and} $\existsFunc(D, u, v')$}{
                \KwRet{true}
            }
        }
    }
    \KwRet{false}
\end{algorithm}

\section{Open problems}

We conclude with following open question:
\begin{itemize}
	\item is there an output sensitive version of the all palindromes reporting -- can we report palindromes
		more efficiently for cases where we know that tree contains $o(n^{1.5})$ palindromes.
\end{itemize}

\section*{Acknowledgments}
  Paweł Gawrychowski's work was done while he held a post-doctoral position at Warsaw
    Center of Mathematics and Computer Science.
  Tomasz Kociumaka was supported by Polish budget funds for science in 2013--2017 as a research project under the `Diamond Grant' program.
  Wojciech Rytter was supported by the Polish National Science Center, grant no NCN2014/13/B/ST6/{\linebreak}00770.
  Tomasz Waleń was supported by the Polish Ministry of Science and Higher Education under the `Iuventus Plus' program in 2015--2016 grant no 0392/IP3/2015/73.

\bibliographystyle{plainurl}
\bibliography{pal.bib}

\begin{thebibliography}{10}

\bibitem{DBLP:journals/dm/AnisiuAK10}
Mira{-}Cristiana Anisiu, Valeriu Anisiu, and Zolt{\'{a}}n K{\'{a}}sa.
\newblock Total palindrome complexity of finite words.
\newblock {\em Discrete Mathematics}, 310(1):109--114, 2010.
\newblock \href {http://dx.doi.org/10.1016/j.disc.2009.08.002}
  {\path{doi:10.1016/j.disc.2009.08.002}}.

\bibitem{DBLP:journals/tcs/BenderF04}
Michael~A. Bender and Martin Farach{-}Colton.
\newblock The level ancestor problem simplified.
\newblock {\em Theor. Comput. Sci.}, 321(1):5--12, 2004.
\newblock URL: \url{https://doi.org/10.1016/j.tcs.2003.05.002}, \href
  {http://dx.doi.org/10.1016/j.tcs.2003.05.002}
  {\path{doi:10.1016/j.tcs.2003.05.002}}.

\bibitem{DBLP:journals/ijfcs/BrlekHNR04}
Srečko Brlek, Sylvie Hamel, Maurice Nivat, and Christophe Reutenauer.
\newblock On the palindromic complexity of infinite words.
\newblock {\em International Journal of Foundations of Computer Science},
  15(2):293--306, 2004.
\newblock \href {http://dx.doi.org/10.1142/s012905410400242x}
  {\path{doi:10.1142/s012905410400242x}}.

\bibitem{Brlek}
Srečko Brlek, Nadia Lafrenière, and Xavier Provençal.
\newblock Palindromic complexity of trees.
\newblock In Igor Potapov, editor, {\em Developments in Language Theory, DLT
  2015}, volume 9168 of {\em LNCS}, pages 155--166. Springer, 2015.
\newblock \href {http://dx.doi.org/10.1007/978-3-319-21500-6_12}
  {\path{doi:10.1007/978-3-319-21500-6_12}}.

\bibitem{DBLP:books/daglib/0023376}
Thomas~H. Cormen, Charles~E. Leiserson, Ronald~L. Rivest, and Clifford Stein.
\newblock {\em Introduction to Algorithms, 3rd Edition}.
\newblock {MIT} Press, 2009.
\newblock URL: \url{http://mitpress.mit.edu/books/introduction-algorithms}.

\bibitem{DBLP:conf/cpm/CrochemoreIKKRRTW12}
Maxime Crochemore, Costas~S. Iliopoulos, Tomasz Kociumaka, Marcin Kubica, Jakub
  Radoszewski, Wojciech Rytter, Wojciech Tyczyński, and Tomasz Waleń.
\newblock The maximum number of squares in a tree.
\newblock In Juha K{\"a}rkk{\"a}inen and Jens Stoye, editors, {\em
  Combinatorial Pattern Matching}, volume 7354 of {\em LNCS}, pages 27--40.
  Springer, 2012.
\newblock \href {http://dx.doi.org/10.1007/978-3-642-31265-6_3}
  {\path{doi:10.1007/978-3-642-31265-6_3}}.

\bibitem{Jewels}
Maxime Crochemore and Wojciech Rytter.
\newblock {\em Jewels of Stringology}.
\newblock World Scientific, 2003.

\bibitem{DBLP:journals/tcs/DroubayJP01}
Xavier Droubay, Jacques Justin, and Giuseppe Pirillo.
\newblock Episturmian words and some constructions of de {L}uca and {R}auzy.
\newblock {\em Theoretical Computer Science}, 255(1-2):539--553, 2001.
\newblock \href {http://dx.doi.org/10.1016/s0304-3975(99)00320-5}
  {\path{doi:10.1016/s0304-3975(99)00320-5}}.

\bibitem{fine1965uniqueness}
Nathan~J. Fine and Herbert~S. Wilf.
\newblock Uniqueness theorems for periodic functions.
\newblock {\em Proceedings of the American Mathematical Society},
  16(1):109--114, 1965.
\newblock \href {http://dx.doi.org/10.2307/2034009}
  {\path{doi:10.2307/2034009}}.

\bibitem{DBLP:conf/stringology/FunakoshiNIBT19}
Mitsuru Funakoshi, Yuto Nakashima, Shunsuke Inenaga, Hideo Bannai, and Masayuki
  Takeda.
\newblock Computing maximal palindromes and distinct palindromes in a trie.
\newblock In Jan Holub and Jan Zd{\'{a}}rek, editors, {\em Prague Stringology
  Conference 2019, Prague, Czech Republic, August 26-28, 2019}, pages 3--15.
  Czech Technical University in Prague, Faculty of Information Technology,
  Department of Theoretical Computer Science, 2019.
\newblock URL: \url{http://www.stringology.org/event/2019/p02.html}.

\bibitem{DBLP:conf/spire/GawrychowskiKRW15}
Pawel Gawrychowski, Tomasz Kociumaka, Wojciech Rytter, and Tomasz Walen.
\newblock Tight bound for the number of distinct palindromes in a tree.
\newblock In Costas~S. Iliopoulos, Simon~J. Puglisi, and Emine Yilmaz, editors,
  {\em String Processing and Information Retrieval - 22nd International
  Symposium, {SPIRE} 2015, London, UK, September 1-4, 2015, Proceedings},
  volume 9309 of {\em Lecture Notes in Computer Science}, pages 270--276.
  Springer, 2015.
\newblock URL: \url{https://doi.org/10.1007/978-3-319-23826-5\_26}, \href
  {http://dx.doi.org/10.1007/978-3-319-23826-5\_26}
  {\path{doi:10.1007/978-3-319-23826-5\_26}}.

\bibitem{DBLP:journals/ejc/GlenJWZ09}
Amy Glen, Jacques Justin, Steve Widmer, and Luca~Q. Zamboni.
\newblock Palindromic richness.
\newblock {\em European Journal of Combinatorics}, 30(2):510--531, 2009.
\newblock \href {http://dx.doi.org/10.1016/j.ejc.2008.04.006}
  {\path{doi:10.1016/j.ejc.2008.04.006}}.

\bibitem{GlenSS19}
Amy Glen, Jamie Simpson, and William~F. Smyth.
\newblock Palindromes in starlike trees.
\newblock {\em Australasian Journal of Combinatorics}, 73(1):242--246, 2019.
\newblock URL: \url{https://ajc.maths.uq.edu.au/pdf/73/ajc_v73_p242.pdf}.

\bibitem{DBLP:journals/ipl/GroultPR10}
Richard Groult, {\'{E}}lise Prieur, and Gw{\'{e}}na{\"{e}}l Richomme.
\newblock Counting distinct palindromes in a word in linear time.
\newblock {\em Information Processing Letters}, 110(20):908--912, 2010.
\newblock \href {http://dx.doi.org/10.1016/j.ipl.2010.07.018}
  {\path{doi:10.1016/j.ipl.2010.07.018}}.

\bibitem{DBLP:journals/siamcomp/HarelT84}
Dov Harel and Robert~Endre Tarjan.
\newblock Fast algorithms for finding nearest common ancestors.
\newblock {\em {SIAM} J. Comput.}, 13(2):338--355, 1984.
\newblock URL: \url{https://doi.org/10.1137/0213024}, \href
  {http://dx.doi.org/10.1137/0213024} {\path{doi:10.1137/0213024}}.

\bibitem{DBLP:journals/tcs/KociumakaPRRW14}
Tomasz Kociumaka, Jakub Pachocki, Jakub Radoszewski, Wojciech Rytter, and
  Tomasz Walen.
\newblock Efficient counting of square substrings in a tree.
\newblock {\em Theor. Comput. Sci.}, 544:60--73, 2014.
\newblock URL: \url{https://doi.org/10.1016/j.tcs.2014.04.015}.

\bibitem{DBLP:journals/algorithmica/KociumakaRRW17}
Tomasz Kociumaka, Jakub Radoszewski, Wojciech Rytter, and Tomasz Waleń.
\newblock String powers in trees.
\newblock {\em Algorithmica}, 79(3):814--834, 2017.
\newblock \href {http://dx.doi.org/10.1007/s00453-016-0271-3}
  {\path{doi:10.1007/s00453-016-0271-3}}.

\bibitem{DBLP:journals/tcs/Simpson14}
Jamie Simpson.
\newblock Palindromes in circular words.
\newblock {\em Theoretical Computer Science}, 550:66--78, 2014.
\newblock \href {http://dx.doi.org/10.1016/j.tcs.2014.07.012}
  {\path{doi:10.1016/j.tcs.2014.07.012}}.

\end{thebibliography}

\end{document}